\newtcolorbox{mybox}[2][]{%
	attach boxed title to top left
	= {xshift=10pt, yshift=-10pt},
	colframe     = black,
	colback      = white,
	coltitle     = black,
	colbacktitle = white,
	title        = #2,#1,
	enhanced,
	left=2pt,
	top=10pt,
}
\newtheorem{theorem}{Theorem}
\newtheorem{definition}{Definition}[section]
\newcommand{\xmark}{\ding{55}}%
\begin{document}
	
	\title{FileDAG: A Multi-Version Decentralized Storage Network Built on DAG-based Blockchain}

	\author{Hechuan Guo,
		Minghui Xu,
		Jiahao Zhang,
		Chunchi Liu,
		Dongxiao Yu,
            Schahram Dustdar,
            Xiuzhen Cheng
    
		\IEEEcompsocitemizethanks{\IEEEcompsocthanksitem H. Guo, M. Xu, J. Zhang, D. Yu and X. Cheng are with the School of Computer and Science and Technology, Shandong University. Email: \{ghc, zjh\}@mail.sdu.edu.cn, \{mhxu, dxyu, xzcheng\}@sdu.edu.cn\\ 
			
		\IEEEcompsocthanksitem C. Liu was with the Department of Computer Science, The George Washington University and now with Ernst \& Young. E-mail: liuchunchi@gwu.edu\\
   
            \IEEEcompsocthanksitem S. Dustdar is with the Research Division of Distributed Systems, TU Wien. Email: dustdar@dsg.tuwien.ac.at\\
            
            Corresponding author: Minghui Xu.
            }% <-this % stops an unwanted space
	}

	\IEEEtitleabstractindextext{%
		\begin{abstract}
			Decentralized Storage Networks (DSNs) can gather storage resources from mutually untrusted providers and form worldwide decentralized file systems. Compared to traditional storage networks, DSNs are built on top of blockchains, which can incentivize service providers and ensure strong security. However, existing DSNs face two major challenges. First, deduplication can only be achieved at the directory-level. Missing file-level deduplication leads to unavoidable extra storage and bandwidth cost. Second, current DSNs realize file indexing by storing extra metadata while blockchain ledgers are not fully exploited. To overcome these problems, we propose FileDAG, a DSN built on DAG-based blockchain to support file-level deduplication in storing multi-versioned files. When updating files, we adopt an increment generation method to calculate and store only the increments instead of the entire updated files. Besides, we introduce a two-layer DAG-based blockchain ledger, by which FileDAG can provide flexible and storage-saving file indexing by directly using the blockchain database without incurring extra storage overhead. We implement FileDAG and evaluate its performance with extensive experiments. The results demonstrate that FileDAG outperforms the state-of-the-art industrial DSNs considering storage cost and latency.
		\end{abstract}

		\begin{IEEEkeywords}
			Decentralized storage networks, DAG-based blockchain, deduplication, file indexing
	\end{IEEEkeywords}}
	
	\maketitle
	
	\IEEEdisplaynontitleabstractindextext
	
	\IEEEpeerreviewmaketitle

	\section{Introduction}\label{sec:introduction}
	
	Blockchain technology is allowing for decentralized computing by creating decentralized trust. This has led to the development of various trustless applications, such as decentralized learning \cite{spdl}, trusted IoT data collection \cite{tems}, and blockchain-based cloud services \cite{cloudchain}. To improve the performance and efficiency of decentralized computing, decentralized storage networks have been created to decrease the amount of redundant storage needed for blockchain. Decentralized Storage Networking is an emerging technology that can aggregate free storage spaces offered by independent storage providers and self-coordinate to provide data storage and retrieval services. Compared to traditional storage networks \cite{bittorrent, storageareanetwork}, a decentralized storage network (DSN) is operated on a blockchain system, which works as an incentive layer. Blockchain rewards miners who provide reliable storage to clients, and thus enables an open manageable storage market. Besides, blockchain can act as a state machine replication protocol to ensure the consistency of file storage against Byzantine nodes. Leveraging blockchain technologies, DSNs (e.g., Filecoin \cite{filecoin}, Storj \cite{storj}, Sia \cite{sia}, Swarm \cite{swarm}) provide worldwide, robust and secure storage services among mutually untrusted users. Filecoin, as the most popular DSN, was built on top of InterPlanetary File System (IPFS) and adopts a novel proof-of-replication method proving that data is correctly stored. As storage infrastructures, DSNs have demonstrated their advantages in applications such as Web 3.0 \cite{w3s}, data sharing \cite{btfs}, and content delivery \cite{filebase}. However, current DSN schemes overlook the following two problems, which significantly affect their performance. 

	\noindent\textbf{[P1] Deduplication in multi-versioned files.}
	Supporting multi-versioned file storage is necessary in DSNs since files are usually dynamically changed or edited and users need to query different versions of a file from time to time. However, files on current DSNs are not editable. Users have to upload all versions of a file, resulting in high redundancy. Even though some DSNs have made efforts in supporting directory-level deduplication, they cannot avoid fine-grained file-level redundancy. For example, Filecoin realizes directory-level deduplication using Merkle DAG \cite{merkledag}, in which objects including files, file chunks, and directories are organized into a Merkle DAG based on their nested relationships, to remove duplicated objects among different directories; but redundencies among different file versions are still unavoidable. Missing file-level deduplication causes the waste of storage and bandwidth. Nevertheless, achieving file-level deduplication is challenging. Due to encryption and obfuscation applied on files, correlation among different versions is implicit in current DSNs, making it very hard to find duplicated contents and establish relationships among multiple versions.

    \noindent\textbf{[P2] File indexing with blockchain.}
	Traditional version control systems commonly adopt a DAG-based version graph \cite{versiongraph} to describe the relationships of multiple versions (also called derivative relationships) and help establish file indexing. However, such a graph should be maintained by a centralized server, e.g., Github \cite{githubgraph}. In current DSNs, a centralized server is not available, and the blockchain database simply stores file information in serialized transactions regardless of the derivative relationships. Therefore, it is unavoidable for each user to locally maintain an isolated database for additional metadata (e.g., a version graph) to ease its file manipulations such as create, version query, modification, merge, and fork. Furthermore, such a deficiency prevents a blockchain from serving many data-intensive applications since it takes a large amount of storage but cannot directly answer file queries in many cases, making itself like a ``burden''. 
	
	To address these problems, we propose FileDAG, a DSN system built on top of a DAG-based blockchain. FileDAG makes use of an increment-based storage mechanism to realize file-level deduplication in storing multi-versioned files. We apply increment generation algorithms to calculate the increment, an edit script that can transform a file from its previous version to its current version. Storing increments achieves fine-grained deduplication at file-level and saves storage space.
	Besides, we adopt a two-layer DAG-based blockchain ledger, which organizes transactions according to their derivative relationships. Facilitated with this ledger, one can manipulate files without establishing additional databases. Moreover, DAG-based ledgers can provide higher concurrency compared to chain-based ones \cite{tangle, hashgraph}, making FileDAG being able to handle simultaneous queries. With these design considerations, FileDAG achieves low storage cost, high system throughput, and efficient file indexing. 
	
	To validate the performance of FileDAG, we build a full-fledged FileDAG over Filecoin, by implementing the increment mechanism and the two-layer ledger mentioned above. Such an implementation ensures that FileDAG not only inherits all the nice features of Filecoin but also extends Filecoin's functionality to include effective file-level deduplication and efficient file indexing. FileDAG is a practical system possessing industrial-grade performance, as is Filecoin. To broaden the application of and welcome examinations on FileDAG, we open-source our designs at GitHub (the two layer ledger: https://github.com/zhuaiballl/DAG-Rider; increment module:
https://github.com/zhuaiballl/dyaic).
	
	\noindent \textbf{Contributions.} Compared to the existing works, our unique contributions can be summarized as follows:
	\begin{enumerate}%[leftmargin=*]
		\item To our best knowledge, FileDAG is the first DSN that supports file-level deduplication for multi-versioned files. We introduce an increment generation method to calculate and store the increment between two neighboring versions rather than storing the entire new version. This significantly reduces the storage cost and bandwidth usage caused by dynamical file changes. 
		\item To support file indexing, FileDAG adopts a two-layer DAG-based blockchain ledger. The lower layer supports operations including create, update, merge and fork while the upper layer ensures ledger consistency. This design integrates version graphs with a DAG-based ledger, thereby saving extra storage space for file indexing. 
		\item Finally, we provide a practical full-fledged implementation of FileDAG and evaluate its performance with extensive experiments. The results demonstrate that FileDAG outperforms the state-of-the-art DSNs considering storage cost as well as the latency of put and get operations. 
	\end{enumerate}
	
	\noindent \textbf{Organization of the paper.}
	The rest of this paper is organized as follows. Section~\ref{sec:related} summarizes related works and presents preliminary knowledge. Section~\ref{sec:filedag_design} details our FileDAG design and demonstrates how it works. Key properties and performance evaluation results of FileDAGE are respectively reported in Section~\ref{sec:analysis} and Section~\ref{sec:exp}. Finally, we summarize this paper in Section~\ref{sec:conclusion} and discuss our future research.

	\section{Related Work and Preliminaries}
	\label{sec:related}
	\subsection{Related Work}
	\subsubsection{Decentralized Storage Network}
	Filecoin \cite{filecoin}, developed by Protocol Labs, is a DSN built on top of IPFS \cite{ipfs}. It proposes Expected Consensus to adjust the winning probability of a miner based on the quantity and quality of its provided storage. Filecoin generates a hash-based content identifier (CID) for each file object (a file, a file chunk, or a directory), and allows users to reuse existing file objects for avoiding duplicatively storing them. Besides, CIDs form a Merkle DAG depicting the nested relationship of the file objects. To realize block concurrency, Filecoin introduces tipset, which allows multiple blocks to be confirmed at the same block height.
	Storj \cite{storj} and Swarm \cite{swarm} were developed based on Ethereum \cite{ethereum}. They make use of Proof-of-Stake consensus\footnote{Since September 15th, 2022, Ethereum has switched its consensus protocol from Proof-of-Work to Proof-of-Stake} and a chain-based ledger that doesn't support concurrency of blocks. Storj employs Object keys as globally unique identifiers of its file objects while Swarm generates addresses as identifiers for file chunks. These two DSN systems both achieve directory-level deduplication.
	Sia \cite{sia} adopts PoW as its consensus protocol. It builds a Merkle tree for each file and takes the Merkle root hash as the identifier of the file, thus supporting directory-level deduplication. The ledger structure of Sia is a chain, thus it cannot process blocks concurrently. Besides, Sia employs the Threefish \cite{threefish} algorithm to encrypt files, making it difficult to support version indexing.
	
	\subsubsection{File Indexing}
	
	File indexing is the process of mapping files with identifiers that can be efficiently searched. Centralized storage systems employ extra databases to record identifiers that are mapped to the locations of the corresponding files \cite{gfs}. Traditional distributed storage networks typically use distributed hash tables to realize file indexing \cite{bittorrent, gnutella, coralcdn}. Decentralized storage networks, i.e., DSNs, use content addressing technology based on distributed hash tables for file indexing. Nevertheless, current methods in DSNs are not sufficiently effective as the complete derivative relationships are hardly retained. For example, IPLD \cite{ipfs} is a data model adopted in Filecoin to describe a file or a directory as an aggregate of components linked together. With IPLD, each file is mapped to a unique hash-based identifier, and the identifier of a directory is a hash of the directory contents combined with pointers to the files. By this way, IPLD links files and directories together for file indexing. Adding, removing, or changing a file under a directory result in a different identifier of the directory, and the derivative relationship between two versions of the directory can be inferred because the two identifiers carry the pointers to the files that stay unchanged. But unfortunately IPLD fails to depict the derivative relationships among different versions of a  file. 
	Additionally, to achieve verifiability and immutability, files in DSNs are always encrypted and then made public; thus their metadata is unaccessible without a valid secret key, making file indexing a challenging problem. Based on the above analysis, one can see that the current file indexing approaches are immature and inefficient.
    
    File provenance requires to track the derivation history of a file based on file indexing. Muniswamy-Reddy \textit{et al.} \cite{pass} designed a storage system that can automatically collect and maintain provenance data. They claimed that provenance data should be maintained separately to serve different purposes. Provchain \cite{provchain} embeds the provenance data into blockchain transactions to improve efficiency and avoid additional storage cost. This incentives us to make blockchain undertake more responsibility in file indexing. 
    
    \subsubsection{Summary}
	
	\begin{table}[htbp]
	\begin{threeparttable}
		\caption{Comparison of FileDAG with Existing DSNs}
			\begin{tabular}{l c c c c}
				\toprule[1pt]
				 & \multicolumn{1}{c}{\textbf{\begin{tabular}[c]{@{}c@{}}Consensus\\ Algorithm\end{tabular}}} & \textbf{Ledger} & \multicolumn{1}{c}{\textbf{\begin{tabular}[c]{@{}c@{}}On-Chain\\ DR\end{tabular}}} & \multicolumn{1}{c}{\textbf{\begin{tabular}[c]{@{}c@{}}Deduplication\\ Level\end{tabular}}} \\
				\midrule[0.5pt]
				
				Filecoin\cite{filecoin} & \multicolumn{1}{c}{\begin{tabular}[c]{@{}c@{}}Expected\\ consensus\end{tabular}} & \multicolumn{1}{c}{\begin{tabular}[c]{@{}c@{}}DAG\\ (tipset)\end{tabular}} & \xmark & Directory \\
				
				Storj\cite{storj} & PoW & Chain & \xmark & Directory \\
				
				Sia\cite{sia} & PoW & Chain & \xmark & Directory \\
				
				Swarm\cite{swarm} & PoW & Chain & \xmark & Directory  \\
				
				\textbf{FileDAG} & DAG-Rider$^\dag$ & DAG & \checkmark & File \\
				\bottomrule[1pt]
			\end{tabular}
			\label{tab1}
		\begin{tablenotes}
		    \item[DR] Derivative Relationship
		    \item[$\dag$] Modified
		\end{tablenotes}
	\end{threeparttable}
	\end{table}
	
	A summary on the major adopted technologies and properties of FileDAG and existing DSNs is reported in Table~\ref{tab1}. One can see that current DSNs (e.g. \cite{filecoin, storj, sia, swarm}) only achieve directory-level deduplication, which means that only files can be reused but the common contents shared by different versions of a file are still stored redundantly. Missing file-level deduplication leads to the waste of storage and bandwidth. Additionally, Storj and Swarm built on Ethereum adopt a chain-based ledger, which stores transactions regardless of their derivative relationships. Sia doesn't consider storing derivative relationship between files either. Filecoin packs multiple blocks in a tipset and still ignores on-chain derivative relationships. Lacking a depiction on the complete derivative relationships among files render these systems fail to provide effective file indexing.
	
	\subsection{Preliminaries}
	\label{sec:pre}
	In this subsection, we provide the preliminary knowledge that are needed by our FileDAG design.
	
	\noindent\textbf{Decentralized storage network (DSN).} DSNs aggregate storage offered by multiple independent storage providers and self-coordinate to provide reliable and secure global data storage and retrieval services to clients without relying on any trusted third party. Generally speaking, the workflow of a DSN consists of two phases: put and get. Users put their files into the storage network and also get files with valid access keys from the network. A DSN must guarantee data integrity, retrievability and fault tolerance. We explain two techniques heavily used in FileDAG, namely content identifier (CID) and Proof-of-Storage (PoS). CID, as a fingerprint, is a hash-based unique identifier that maps to a data chunk. In FileDAG, a client can generate CIDs for each original file or increment. PoS helps miners prove that they have stored files physically. In Filecoin, a miner has to periodically generate proofs to demonstrate that files are indeed locally stored on hardware, which mitigates Sybil attacks.
	
	\noindent\textbf{DAG-based blockchain.}
	A blockchain is a decentralized tamper-proof append-only ledger. Nodes in a blockchain network achieve consensus on the ledger using a consensus algorithm. According to the ledger structure, blockchains can be categorized as chain-based or DAG-based. For a chain-based ledger, transactions are packed into blocks. Each block is hash-chained to its previous block to ensure consistency and persistence \cite{garay2015bitcoin}. As there can only be one block at a block height, chain-based blockchains have weak concurrency. Bitcoin-NG \cite{bitcoinng} intends to improve concurrency by adding micro blocks alongside a main chain. However, this method does not fundamentally improve concurrency. Therefore, DAG-based blockchains emerge \cite{nxt, tangle, dagcoin}. In a DAG-based blockchain, each block (or transaction) can point to multiple previous blocks and form a directed acyclic graph (DAG).
	Filecoin makes use of tipset to increase network throughput, where a tipset is a set of blocks, and the blockchain in Filecoin is a chain of tipsets. Blocks in a tipset can point at multiple blocks in the previous tipset. As a result, blocks in Filecoin form a DAG. But tipset is not flexible enough to support file indexing; therefore we propose a two-layer DAG-based blockchain in FileDAG to address this issue.

	\section{FileDAG Design}
	\label{sec:filedag_design}
	In this section, we begin with the design objectives and overview of FileDAG and then describe its design details. 
	
	\subsection{Design Objectives and Strawman}
	\label{ss:overview}
	
	\noindent \textbf{Design Objectives.} We design FileDAG following three objectives: (1) Consistency. Honest nodes should agree on the same view of the blockchain ledger and the same set of proofs of storage. Deals of storage should be irreversible. (2) Deduplication. Files stored on a DSN can share common components, especially in a multi-version file system. FileDAG should use efficient deduplication methods to save storage space. (3) Fast put \& get. The design of FileDAG should consider both bandwidth usage and latency; any mechanism that can help to save storage space should not bring too much extra latency.
	The overall latency of putting and getting a file in FileDAG should be low despite spending time on the increment generation.
	
	\begin{figure}[htbp]
		\centerline{\includegraphics[width=0.46\textwidth]{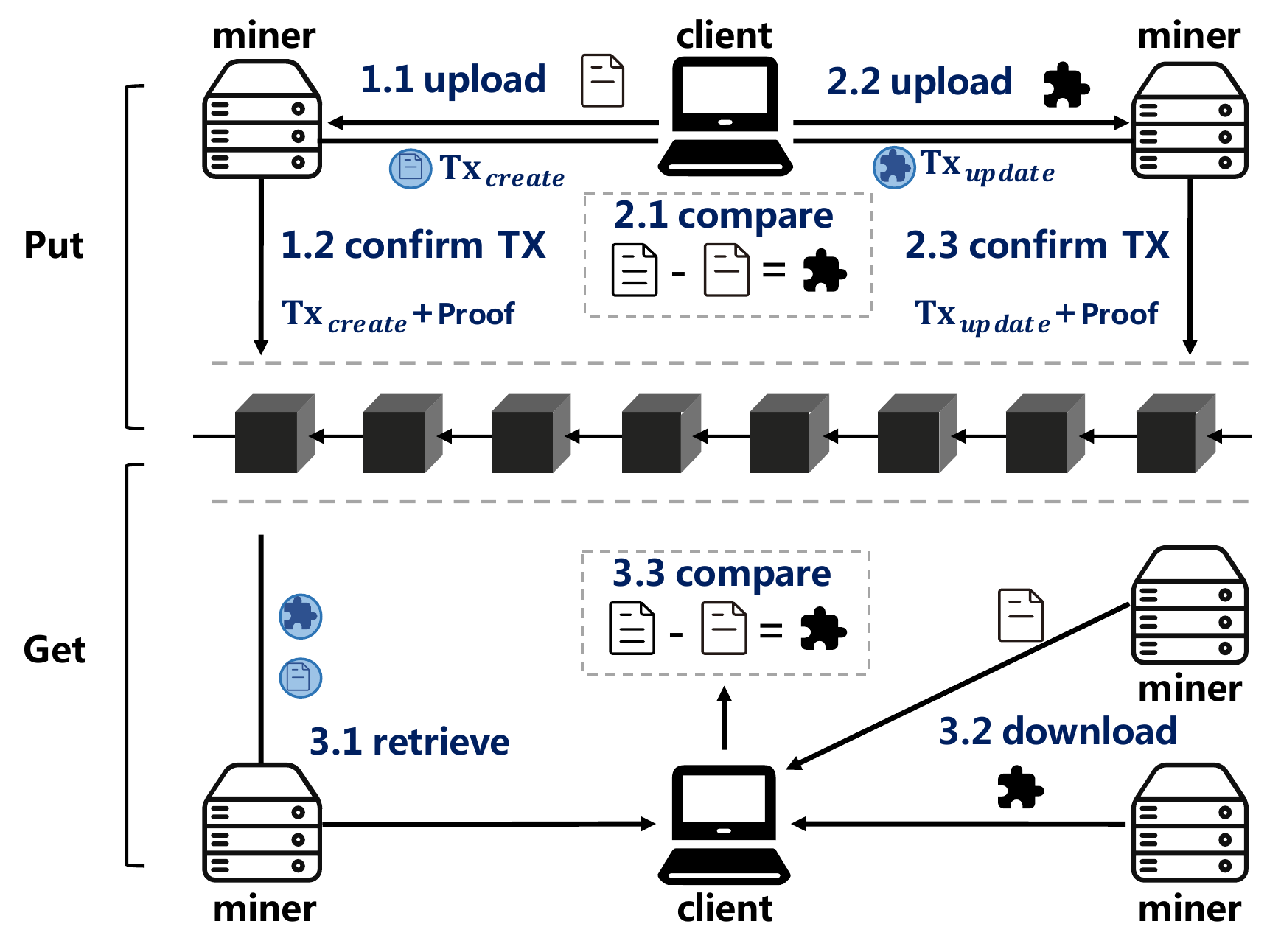}}
		\caption{A strawman design of FileDAG}
		\label{fig:strawman}
	\end{figure}
	
	\noindent \textbf{Strawman.} Here we provide a strawman as shown in Fig.~\ref{fig:strawman} to illustrate the whole picture of FileDAG. There are two entities in FileDAG, client and miner. Clients pay tokens to use storage, while miners earn tokens by providing services. Miners pledge storage to the FileDAG network to provide storage and retrieve services by helping clients search information on a blockchain. All miners maintain the blockchain ledger of FileDAG. 
	
	The workflow can be divided into two phases, put and get. In the put phase, a client can either create an original file or update an existing one on the FileDAG network. An original file should be uploaded to a miner [Step 1.1]. The miner then generates a  transaction $\mathsf{TX_{create}}$ for the file and broadcasts the transaction to the blockchain network [Step 1.2]. To update a file, the client first compares the new version to the previous version to get the increment [Step 2.1]; then the increment is sent to a miner who can issue the corresponding $\mathsf{TX_{update}}$ (or $\mathsf{TX_{merge}}$, $\mathsf{TX_{fork}}$). Note that only when transactions are confirmed on the blockchain can the put phase succeed. In the get phase, a client first sends a retrieval request [Step 3.1], then download the original file and the increments from the file holders to obtain a specific version [Step 3.2]; finally, the client assemble all fragments to recover the file [Step 3.3]. 
	
	In the following subsections, we detail the cores of FileDAG, including the increment generation, the two-layer DAG-based ledger, and the file recovery components. Note that our elaboration on increment generation focuses on the storage of multi-versioned files; but the idea is applicable to the more general case where a client specifies the relationship between two files, which is common in applications such as recreation of digital arts and quotes of contents. Table~\ref{tab:symboltabel} lists frequently used symbols to facilitate our presentation.
	
	\begin{table}[htbp]
	    \centering
	    \caption{Summary of Symbols}
	    \begin{tabular}{c l}
	    \toprule[1pt]
	    \textbf{Symbol} & \textbf{Description} \\
	    \midrule[0.5pt]
	        $G$ & DAG-based ledger of FileDAG \\
	        $V$ & the set of vertices in $G$ \\
	        $E_l$ & the set of edges in the lower layer of $G$\\
	        $E_u$ & the set of edges in the upper layer of $G$\\
	        $v$ & version of a multi-versioned file\\
	        $\Delta$ & increment \\
	        $\mathsf{TX}$ & transaction\\
	        $N$ & network size\\
	        $f$ & the maximum number of Byzantine fault nodes to tolerate\\
	        $\mathsf{CID}_v$ & Content ID of $v$\\
	        $\mathsf{REV}$ & revision operation\\
	        $\mathsf{ADD}$ & addition operation\\
	        $v_t$ & the $t$-th version of a multi-versioned file\\
	        $\tau_t$ & the type of operation that outputs $v_t$\\
	        $S_t$ & size of $v_t$\\
	        $I_t$ & size of the increment between $v_t$ and $v_{t-1}$\\
	        $E(\cdot)$ & expectation operator\\
	        $C$ & expected storage cost without increment-based storage\\
	        $C'$ & expected storage cost of increment-based storage\\
	        $n$ & the number of versions\\
	    \bottomrule[1pt]
	    \end{tabular}
	    \label{tab:symboltabel}
	\end{table}

	\subsection{Increment Generation}
	\label{ss:ig}
	Increment mechanism has been widely adopted in cloud computing to shorten backup windows and save storage \cite{nakivoincrement}. As we have discussed in our strawman design, FileDAG updates files by uploading increments instead of an entire new file. We propose an increment generation method based on our insight that files on DSNs are not simply static but changes over time. Such dynamicity can be found everywhere especially when storing codebases, medical records, mobile applications etc. Neighboring versions of a file usually share a large amount of duplicate contents. Our increment generation method intends to identify such contents, which later will be used for file recovery. 
	
	In concrete, FileDAG adopts patch algorithms to generate increments for multi-versioned files. To achieve a better performance, we adaptively use two patch algorithms, i.e., Myers \cite{myers1986ano} and BSDiff \cite{bsdiff}, to process text files and non-text files (binary files), respectively, rather than rely on one algorithm. Assume we have two files, an old one $A$ (of size $|A|$) and a new one $B$ (of size $|B|$). Both Git and diff commands in Linux use Myers, which takes $O(|A|+|B|+D^2)$ expected-time under a basic stochastic model \cite{myers1986ano}, where $D$ is the size of the minimum edit script between them. The Myers algorithm can quickly generate patches for text files, but cannot efficiently handle binary files. When forcing Myers to treat binary files as text files, the algorithm runs slowly and the complexity of generating an increment becomes $O(|A||B|)$.
    To process non-text files, we choose BSDiff which runs in $O((|A|+|B|)\log |A|)$ time. Besides, BSDiff has been widely used to generate patch files for mobile applications, which proves its effectiveness. In our implementation, FileDAG adaptively switches between Myers and BSDiff. It feeds the files into an increment module (see Fig.~\ref{fig:block_diagram}), which selects Myers for text files and BSDiff for non-text files to generate increments. In addition, we employ a small trick in which if $|\Delta_{AB}| > |B|$, FileDAG takes $B$ as a new original file instead of storing the increment $\Delta_{AB}$.
	
	To update a file, the client sends the increment to a miner who responds with a CID.  Then the miner generates a proof for this increment following the Proof-of-Storage protocol. In our implementation, we adopt the same PoS protocol as Filecoin since FileDAG does not focus on improving this process. 
	
	\subsection{Two-Layer DAG-based Ledger}
	\label{ss:dag}
	
	FileDAG uses a two-layer DAG-based ledger denoted as $G=(V, E_l, E_u)$. Both layers share the same set of vertices $V$. $E_l$ and $E_u$ are respectively the sets of edges in the lower layer and the upper layer. Each vertex in the ledger represents a transaction (a file version). Edges in the lower layer are used to describe derivative relationships between neighboring versions while the upper layer adds more edges to ensure consistency. 
	
	\noindent\textbf{Lower Layer $\bm{E_l}$.} Fig.~\ref{fig:version_dag} demonstrates an example lower layer $E_l$, in which each vertex (i.e., a transaction) also corresponds to a specific file version since it contains the CID of an original file or an increment. Each edge represents the derivative relationship between two vertices. 
	
	We allow four different types of transactions to describe operations launched by a client, including create, update, merge and fork, where the latter three depict the derivative relationships among different file versions. Correspondingly, an edge in $E_l$ represents an update, or a merge, or a fork operation.
    A create transaction is used to record a new original file. When a client creates a new file, it transfers the file to a miner, then constructs a create transaction $\mathsf{TX_{create}} \leftarrow \langle \mathsf{CREATE}, \mathsf{CID}_{v_0} \rangle$ and broadcasts it to blockchain, where $\mathsf{CID}_{v_0}$ is the identifier generated for the file. To verify that a transaction is sent by a client, each transaction should be correctly signed by the client's secret key. For convenience and clearance, we omit signatures when describing transactions in the rest of this paper. %The client sends a create transaction when the file is sent to a miner. 
    When a client intends to put an increment to FileDAG, it sends an update transaction $\mathsf{TX_{update}} \leftarrow \langle \mathsf{UPDATE}, v, \mathsf{CID}_\Delta \rangle$, where $v$ is the version that the update follows, and $\mathsf{CID}_\Delta$ is the identifier of the increment. 

	A merge transaction $\mathsf{TX_{merge}} \leftarrow \langle \mathsf{MERGE}, v, v' \rangle$ combines two version branches $v$ and $v'$. A merge operation does not require adding new information therefore it does not generate increments. We only allow FileDAG to merge two versions at a time because merging multiple versions at once incurs a large complexity of addressing content conflict; but FileDAG can merge multiple versions by calling the merge operation multiple times. A fork transaction $\mathsf{TX_{fork}} \leftarrow \langle \mathsf{FORK}, v, V' \rangle$ can fork a version $v$ to get a set of new versions denoted by $V'$. Each new version contains an empty increment but is assigned a new CID. With fork operations, users can create and work on their own branches without the need of making new copies.

	\begin{figure}[!htbp]
		\centerline{\includegraphics[width=0.48\textwidth]{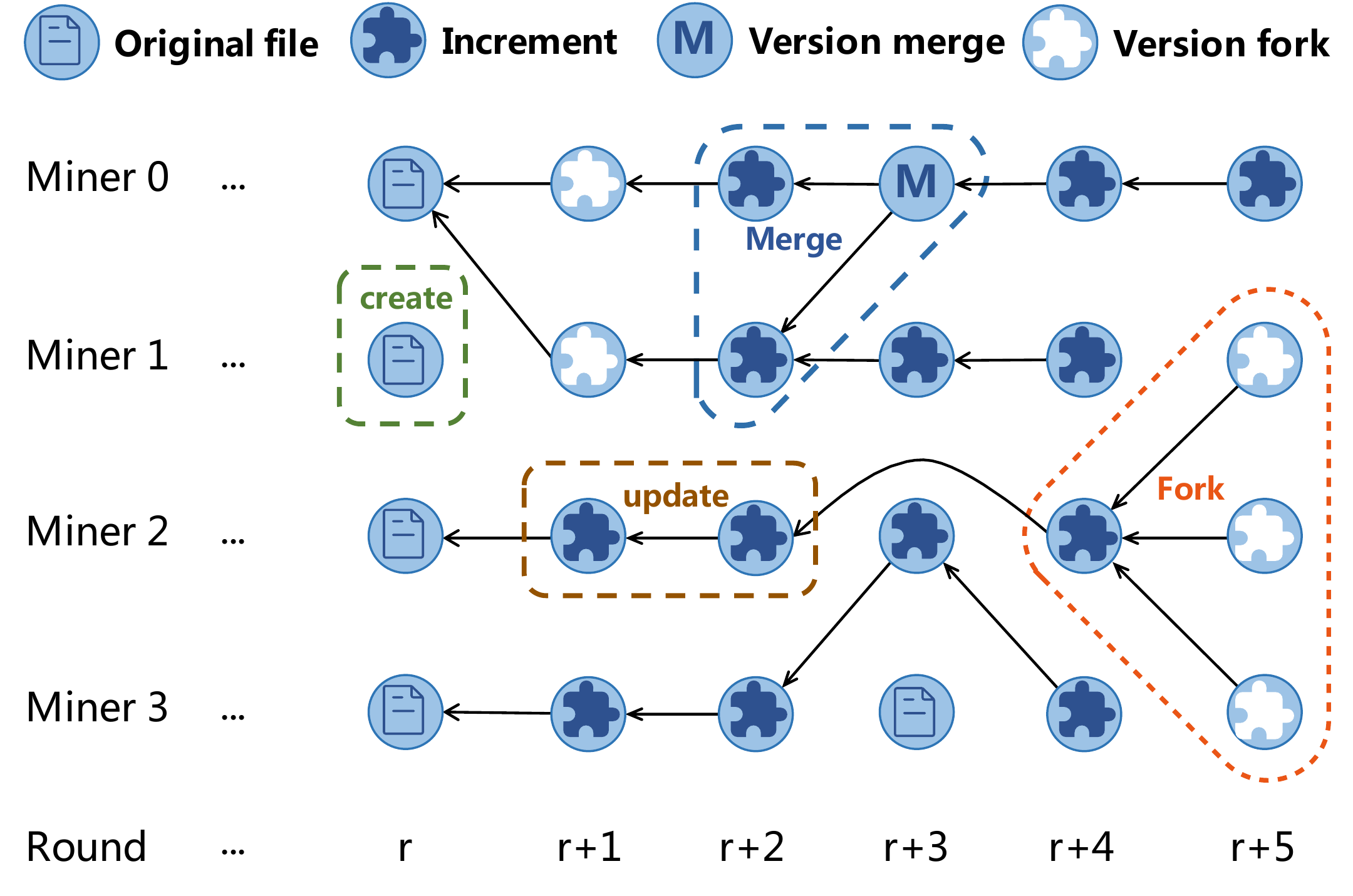}}
		\caption{Lower layer $E_l$ and the four types of transactions}
		\label{fig:version_dag}
	\end{figure}
	
\noindent\textbf{Upper Layer $\bm{E_u}$.} The DAG ledger formed by $E_l$ and $V$  has no consistency guarantee as it might be disconnected, making a miner unaware of a newly added transaction if the transaction is not linked to the ledger component stored by the minor. For example in Fig.~\ref{fig:version_dag}, Miner 0 does not know the newly added transaction created by Miner 3 at round $r+3$. This implies that honest miners may have different views of the ledger and fail to output the same result for a query, thus breaking the ledger's consistency property. 
To overcome such a problem, we add extra edges as shown in Fig.~\ref{fig:filedag_ledger} (the dotted arrows) to form the upper layer edge set $E_u$. More specifically, we modify the ledger construction algorithm in DAG-Rider \cite{dagrider} to construct $E_u$. In DAG-Rider, each vertex is associated with a round number (see Fig.~\ref{fig:filedag_ledger}). Each miner broadcasts one transaction (creating one vertex) per round and each vertex references at least $2f+1$ vertices in the previous round, where $f$ is the maximum number of Byzantine nodes to tolerate. That is, to advance to round $r+1$, a miner first needs to identify $2f+1$ vertices constructed by different miners at round $r$. Such a DAG construction is proved to achieve Byzantine atomic broadcast \cite{dagrider}, which possesses a strong consistency guarantee. Note that one can adopt other approaches to construct $E_u$, as long as the ledger formed by all  edges in $\mathrm{E}_l \cup \mathrm{E}_u$ realizes Byzantine atomic broadcast.
 More details about the Byzantine atomic broadcast will be discussed in Section~\ref{sec:analysis}.

The whole procedure of constructing our two-layer DAG-based ledger can be summarized as follows. At any round, an incoming transaction first points to those confirmed in the previous rounds, following the derivative relations (update, merge, or fork) to contribute edges to $E_l$; then we follow appropriate rules to select a number of other confirmed transactions and link the incoming transaction to them to construct edges for $E_u$. The DAG ledger formed by $\mathrm{E}_l \cup \mathrm{E}_u$  has strong consistency guarantee (see Section~\ref{sec:analysis}).
 
	\begin{figure}[htbp]
		\centerline{\includegraphics[width=0.48\textwidth]{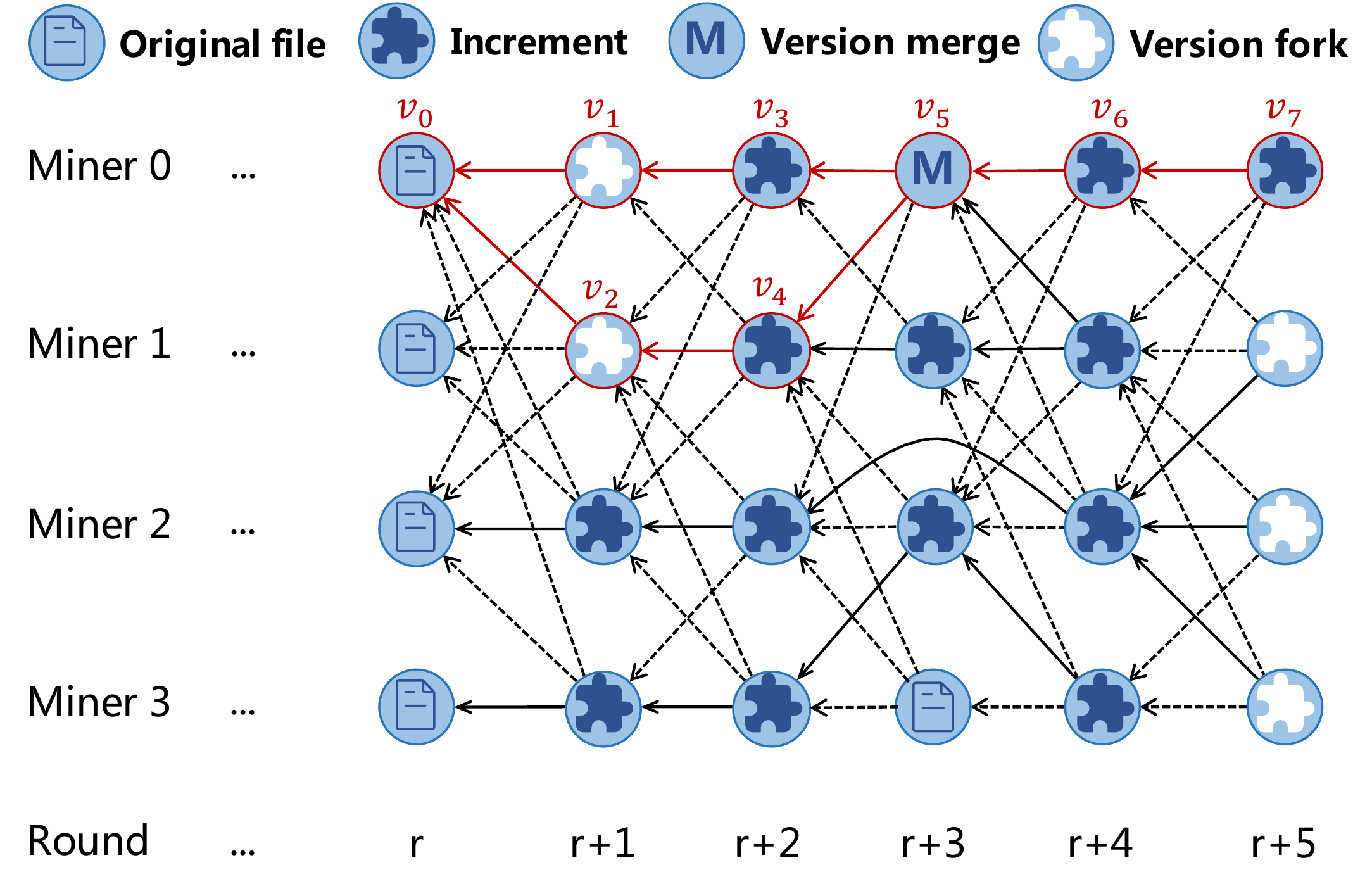}}
		\caption{Upper layer $E_u$}
		\label{fig:filedag_ledger}
	\end{figure}
	
	\subsection{File Recovery}
	\label{ss:filerecovery}
	Based on the lower layer of the DAG-based ledger, a miner can easily gather file fragments needed to recover a file. Considering that files are stored as increments for different versions, we propose Algorithm~\ref{alg:file_recover} to retrieve versions and recover the queried file. This algorithm consists of two functions, namely $\mathsf{Retrieve}()$ and $\mathsf{Recover}()$. 
	
	First, the miner runs $\mathsf{Retrieve}(v)$ (line 2-11) to obtain the versions of all fragments needed to recover the file of version $v$. 
	$\mathsf{Retrieve}(v)$ starts breadth first search (BFS) at version $v$ traversing the edges in $E_l$, and stops iteration when meets the version corresponding to a complete file. Using BFS rather than DFS (depth first search) can ensure that all the increments are marked in a reverse topological order without a sort procedure whose time complexity might be superlinear. All the traversed nodes are placed in the array $\mathsf{versions}$.  When BFS stops, $\mathsf{Retrieve}(v)$ reverses the order of the array $\mathsf{versions}$ and then returns it to the client (line 12). After receiving $\mathsf{versions}$, the client runs $\mathsf{Recover}(\mathsf{versions})$ to download the original file and the increments in $\mathsf{versions}$, patch the increments to the original file following the order in $\mathsf{versions}$, and finally output the requested file (line 15-20). The $\mathsf{Patch}$ algorithm takes either Myers or BSDiff, depending on whether the file is a text or not, as explained in section~\ref{ss:ig}.  These two algorithms both have time complexity of $O(|A|+D)$, where $|A|$ is the size of the file being patched and $D$ is the size of the increment. Thus, the overall time complexity of file recovery is linear to the total size of the original file and the increments.

	\begin{algorithm}%\SetKwData{Left}{left}\SetKwData{This}{this}\SetKwData{Up}{up} \SetKwFunction{Union}{Union}\SetKwFunction{FindCompress}{FindCompress} \SetKwInOut{Input}{input}\SetKwInOut{Output}{output}
		\caption{File Recovery}
		\label{alg:file_recover}
		\textcolor{blue}{//Find vertices of a requested version on a ledger}\\
		\textbf{Function} $\mathsf{Retrieve}$($v$)\\
		$\mathsf{versions}[]$ $\leftarrow$ an empty array\\
		$Q$ $\leftarrow$ an empty queue\\
		$Q.\mathsf{Push}(v)$ \textcolor{gray}{//an empty queue to store versions}\\
		\While{$Q$ is not empty}{
			$\mathsf{temp}$ $\leftarrow$ $Q.\mathsf{Dequeue}()$\\
			$\mathsf{versions}.\mathsf{Append(temp)}$ \\
			\If{$\mathsf{temp}$.$\mathsf{type}$ $\ne$ $\mathsf{origin}$}{
				\For{each $\mathsf{pre}$ $\in$ $\mathsf{temp.previousVersions}$}{
					$Q.\mathsf{Push(pre)}$
				}
			}
		}
		Reverse and then return $\mathsf{versions}$\\
		
		\textcolor{blue}{//Recover a file using file fragments}\\
		\textbf{Function} $\mathsf{Recover(versions)}$\\
		Download all file fragments as $\mathsf{Data}$\\
		$v_0 = \mathsf{versions}[0]$\\
		$file$ $\leftarrow$ $\mathsf{Data}[v_0]$ \textcolor{gray}{//initialize file to an original one}\\
		\For{each $v$ $\in$ $\mathsf{versions}$ with $v$.$\mathsf{type}$=$\mathsf{increment}$}{
			$\mathsf{Patch}$($file$, $\mathsf{Data}[v]$)\\
		}
		return $file$
	\end{algorithm}

	\subsection{FileDAG Workflow}
	\label{ss:workflow}
	To end this section, we provide the workflow of FileDAG, which consists of five major steps including Create, Update, Retrieve, Download and Recover, as illustrated in Fig.~\ref{fig:protocol}. 
	
	\noindent\textbf{Create.} When creating an original file $v_0$ in the FileDAG network, a client first calculates $\mathsf{CID}_{v_0}$ as the fingerprint of ${v_0}$. The client then sends a message containing $\mathsf{CID}_{v_0}$ to a miner that might later provide storage services. If the miner is willing to store the file, it starts synchronizing $v_0$ with the client. After synchronization, the client signs and sends a create transaction $\langle \mathsf{CREATE}, \mathsf{CID}_{v_0} \rangle$ to the blockchain ledger. Then the miner generates a proof-of-storage for ${v_0}$ and settle down the received transaction. 
	
	\noindent\textbf{Update.} The main difference between creating and updating a file is that updating a file needs to store an increment rather than the entire complete file. Suppose we have a version $v$ and a new version $v'$. The increment generation method provides the client with an increment denoted by $\Delta$. Then the client calculates the CID of $\Delta$ and sends $\Delta$ to a storage miner who is responsible for generating a proof and settling down the transaction. Recall that when updating a file, a client can issue three types of transactions, namely update, merge and fork.
	
	\noindent\textbf{Retrieve.} Fetching a file with a specific version in FileDAG consists of two steps: Retrieve and Recover.  A client sends a retrieve message containing $\mathsf{CID}_{v}$ to a miner that provides retrieval services to get the list of CIDs to recover $v$. After receiving a retrieve message, the miner first looks up in its DAG ledger to locate $\mathsf{CID}_{v}$ and then calls function $\mathsf{Retrieve}(\mathsf{CID}_{v})$ and forwards its output $\mathsf{versions}$ to the client. In the example illustrated in Fig.~\ref{fig:protocol}, the CID list contains $\mathsf{CID}_v$ and $\mathsf{CID}_\Delta$.
	
	\noindent\textbf{Download \& Recover.} After obtaining $\mathsf{versions}$, the client calls function $\mathsf{Recover}(\mathsf{versions})$ to downloads all related file fragments based on $\mathsf{versions}$. For each file fragment, the client needs to send a download message along with the CID to the miner who stores the data. After gathering all the required components, the client patches the increments to the original file to recover file $v$.

	\begin{figure}[tb]
		\centerline{\includegraphics[width=0.48\textwidth]{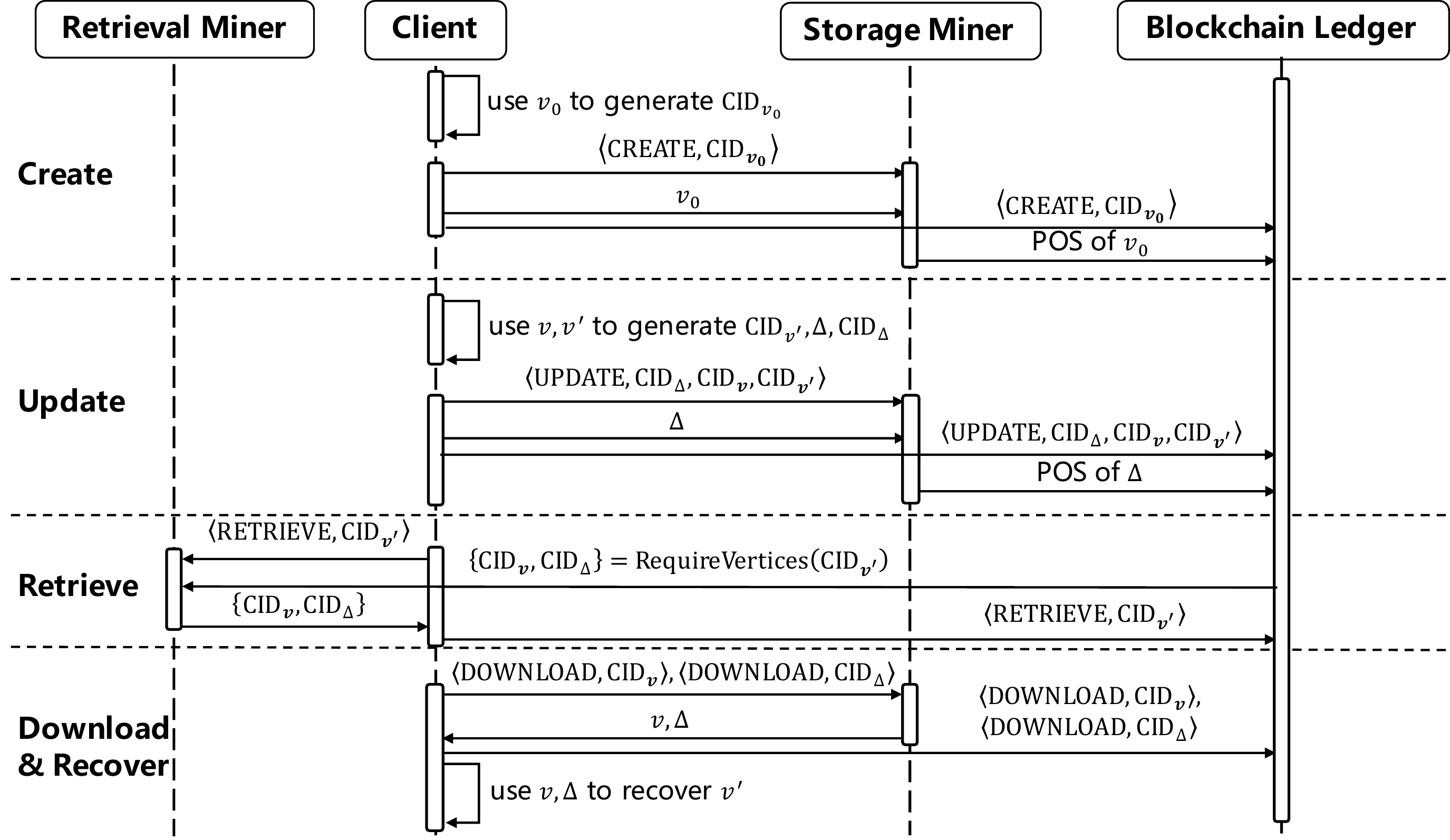}}
		\caption{Protocol sequence diagram of FileDAG.}
		\label{fig:protocol}
	\end{figure}
	
	\section{Analysis}
	\label{sec:analysis}
	
	In this section, we analyze two properties of FileDAG: consistency and efficiency (in terms of storage cost).
	
	\subsection{Consistency}

Byzantine atomic broadcast \cite{dagrider} guarantees the following properties:

\begin{itemize}
    \item \textbf{Agreement.} If an honest node $p$ commits vertex $i$ in a DAG, then every honest node $p'$ eventually commits $i$.
    \item \textbf{Integrity.} For each round $r$ and node $p$, an honest node $p'$ accepts at most one vertex proposed by $p$ in round $r$.
    \item \textbf{Validity.} If an honest node $p$ proposes a vertex $i$ in round $r$, then every honest node $p'$ eventually commits $i$.
    \item \textbf{Total order.} If an honest node $p$ commits vertex $i$ before committing vertex $j$, then no honest node commits vertex $j$ without first committing vertex $i$.
\end{itemize}

The construction of the two-layer DAG ledger ($\mathrm{E}_l \cup \mathrm{E}_u$) in FileDAG follows the algorithm of DAG-Rider; thus $\mathrm{E}_l \cup \mathrm{E}_u$ can be reduced to DAG-Rider's ledger. As DAG-Rider is a byzantine atomic broadcast implementation which guarantees the above properties, the two-layer DAG ledger of FileDAG possesses these properties as well. 	
	
	\begin{definition}
	    (Consistency of multi-version DSN). For any version $v$ of a file, an honest node can be convinced by the PoS proof that $v$ is available in the FileDAG network only when $v$ is indeed available; and if an honest node claims that $v$ is available, then all other honest nodes claim the same. 
	\end{definition}
	
	\begin{theorem}\label{theo:dsnconsistency}
	    FileDAG meets consistency of multi-version DSN.
	\end{theorem}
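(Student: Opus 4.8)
The plan is to decompose the consistency definition into its two constituent assertions---an availability soundness claim (an honest node is convinced $v$ is available only when it truly is) and an availability agreement claim (if one honest node claims $v$ is available, all honest nodes do)---and discharge each separately, reusing the Byzantine atomic broadcast guarantees already established for the two-layer ledger $\mathrm{E}_l \cup \mathrm{E}_u$.

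First I would handle the soundness half. Here the key observation is that a version $v$ is represented on the ledger by a transaction (a vertex) carrying $\mathsf{CID}_{v_0}$ or $\mathsf{CID}_\Delta$, and that the corresponding miner must produce a Proof-of-Storage for that CID before the transaction is settled. I would invoke the security of the underlying PoS protocol (inherited from Filecoin, as stated in Section~\ref{ss:ig}) to argue that a valid PoS proof for the CID of $v$'s original file and every increment on the path returned by $\mathsf{Retrieve}(v)$ can be produced only if those data chunks are genuinely stored; since $\mathsf{Recover}$ reconstructs $v$ exactly from that original file and those increments (by correctness of the $\mathsf{Patch}$ algorithms, Section~\ref{ss:filerecovery}), the existence of accepted proofs implies $v$ is recoverable, i.e.\ available. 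The collision-resistance of the hash-based CID is what ties "proof for this CID" to "this particular data", so I would state that as the cryptographic assumption being leaned on.

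Second I would handle the agreement half, which is where the two-layer ledger does its work. Suppose an honest node $p$ claims $v$ is available; by the soundness argument this is because $p$ sees, on its local view of the ledger, the committed transaction for $v$ together with the committed transactions for every vertex on the $E_l$-path down to an original-file vertex, each with a valid PoS proof. I would then apply Agreement and Total order of Byzantine atomic broadcast: every honest node $p'$ eventually commits the same set of vertices in the same order, so $p'$ eventually sees exactly the same transactions (and hence the same CIDs and the same PoS proofs, which are part of the committed data) that convinced $p$. Running $\mathsf{Retrieve}$ on the committed $E_l$ edges is deterministic, so $p'$ obtains the same fragment list and reaches the same conclusion that $v$ is available. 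Integrity is used to rule out equivocation---no honest node accepts two conflicting vertices from the same miner in the same round---so the $E_l$ structure every honest node reconstructs is identical.

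The main obstacle I anticipate is the interface between the two layers: the atomic-broadcast properties are proved for $\mathrm{E}_l \cup \mathrm{E}_u$ as a DAG-Rider instance, but the availability claim and the $\mathsf{Retrieve}$ traversal are stated purely in terms of $E_l$. I need to argue carefully that committing the full vertex (as DAG-Rider does) commits its $E_l$ out-edges as well---i.e.\ that the derivative-relationship pointers are part of the transaction payload and therefore inherit Total order---so that all honest nodes agree not just on the set of versions but on the precise $E_l$-graph used for recovery; without this the deterministic-traversal step does not go through. A secondary subtlety is ensuring the path from $v$ down to an original file is finite and well-defined (no cycles, terminates at a $\mathsf{CREATE}$ vertex); I would note this follows from the acyclicity of the DAG and the fact that every non-origin vertex points to strictly earlier-round vertices.
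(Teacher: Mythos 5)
Your proposal is correct and follows essentially the same route as the paper: both rest on the Byzantine atomic broadcast guarantees of the two-layer ledger to get agreement on the committed set of transactions and PoS proofs, on the soundness of the Filecoin PoS protocol to tie accepted proofs to actual availability, and on the determinism of $\mathsf{Retrieve}$ together with the correctness of the diff/patch algorithms to conclude that all honest nodes reach the same availability verdict. Your explicit attention to the soundness/agreement split, the collision-resistance of CIDs, and the fact that the $E_l$ out-edges must be part of the committed vertex payload are refinements the paper leaves implicit rather than a different argument.
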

	
	\begin{proof}
	    First, the set of PoS committed in the FileDAG ledger is consistent. Based on the agreement and validity of byzantine atomic broadcast \cite{dagrider}, each PoS proposed by an honest node is committed by all honest nodes, and every honest nodes commit the same set of PoSes. 
	    
	    Second, the consistency of PoS verification, i.e., for any version $v$ of a file, if any honest node $p$ accepts that $v$ is available by verifying the PoS proofs, then every other honest node $p'$ accepts that $v$ is available if $p'$ verifies the corresponding proofs. FileDAG employs the PoS algorithm of Filecoin. Assuming the soundness of the PoS algorithm, i.e., a miner can output a valid PoS of a file if and only if it is able to output a copy of the file, every honest node outputs the same verification result for the same PoS. Provided the consistency of PoS committed in the FileDAG ledger, the array of CID $\mathsf{versions}$ output by function $\mathsf{Retrieve}(v)$ is determined for determined $v$, and thus the consistency of PoS verification is satisfied.
	    
	    Last, FileDAG meets consistency of multi-version DSN. Assume that the diff algorithms and the corresponding patch algorithms are correct so that when comparing two files (A and B), a diff algorithm always generates the same increment, and patching the increment to A always yields B. Combining the ledger consistency and the consistency of PoS verification, one can see that the consistency of multi-versioned files is proved.
	\end{proof}
	
	\subsection{Storage Cost}
	Next we analyze the storage cost of FileDAG. 
	Let $S_t$ denote the size of the $t$th version of a multi-versioned file and $I_t$ denote the size of the increment that the $t$th version differs from its previous version. After analyzing the growth of several GitHub repositories, we found that for each repository, typically there are two types of modifications for its growth, namely revision and addition. A revision ($\mathsf{REV}$) operation on a repository does not significantly change the size of the repository, and the size of increment between the updated version and its previous version is much smaller than that of the whole repository. An addition ($\mathsf{ADD}$) operation usually adds a large quantity of contents to a repository. For most repositories under our analysis, the number of revision operations is about ten or hundred times of that of the addition operations. But the size of the increment brought by an addition operation is ten or hundred times of the increment brought by a revision operation. To analyze the storage cost of FileDAG, we make a few assumptions on the growth of a multi-versioned file. 
	
	Consider the initial version of a file as an empty file with size zero, then the creation of a file can be regarded as an addition operation. In other words, we have $S_0 = 0$ and $S_1$ is the length of the original file.
	
    For each version $v_t, t>1$, the type $\tau_t$ of operation that outputs $v_t$ can be regarded as a random variable (and the sequence of operations observed by each node is consistent, as proved in Theorem~\ref{theo:dsnconsistency}). Let
	    \[\mathsf{Prob}(\tau_t=\mathsf{ADD}) = p, \mathsf{Prob}(\tau_t=\mathsf{REV}) = 1-p\]
	and
	\begin{equation}
		I_t=\left\{
		\begin{array}{rcl}
			r_t & & \tau_t=\mathsf{REV}\\
			a_t & & \tau_t=\mathsf{ADD}
		\end{array}
		\right.
	\end{equation}
	For a specific multi-versioned file, one can assume that the ratios $\frac{1-p}{p} < 1$ and $\frac{\mathrm{E}(r)}{\mathrm{E}(a)} > 1$ are constants. Then we have
	\[\frac{1-p}{p}\cdot\frac{\mathrm{E}(r)}{\mathrm{E}(a)} = O(1)\]
	
	\begin{theorem}
	    Let $C$ and $C'$ denote the expected storage cost of storing a file having $n$ versions without and with the increment mechanism, then we have $C' = O(n^{-1})C$.
	\end{theorem}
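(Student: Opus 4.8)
The plan is to write both $C$ and $C'$ in closed form as functions of $n$ using linearity of expectation, and then bound their ratio. First I would set up a recursion for the running file size $S_t$. Since $S_0=0$ and the size change at step $t$ is governed by $\tau_t$ — an $\mathsf{ADD}$ enlarges the file by $\approx a_t$ while a $\mathsf{REV}$ leaves $S_t$ essentially equal to $S_{t-1}$ (this is the paper's stated growth model) — taking expectations gives $\mathrm{E}(S_t)=\mathrm{E}(S_{t-1})+p\,\mathrm{E}(a)$ up to a lower-order revision term, hence $\mathrm{E}(S_t)=\Theta\!\big(t\,p\,\mathrm{E}(a)\big)$, with $\mathrm{E}(S_1)=\mathrm{E}(a)$. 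This uses the independence the paper already assumes on the growth process ($\tau_t$ and the sizes $a_t,r_t$ drawn independently of the history).

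Next I would evaluate the two costs. Without the increment mechanism every one of the $n$ versions is stored in full, so $C=\sum_{t=1}^{n}\mathrm{E}(S_t)=\Theta\!\big(n^{2}\,p\,\mathrm{E}(a)\big)$. With the increment mechanism, version $1$ is stored as an original file at expected cost $\mathrm{E}(a)$ and each later version $t\in\{2,\dots,n\}$ contributes only its increment, of expected size $\mathrm{E}(I_t)=p\,\mathrm{E}(a)+(1-p)\,\mathrm{E}(r)$; the ``$|\Delta_{AB}|>|B|$'' fallback can only lower this, so dropping it is safe for an upper bound. Hence $C'=\mathrm{E}(a)+(n-1)\big(p\,\mathrm{E}(a)+(1-p)\,\mathrm{E}(r)\big)=\Theta\!\big(n\,(p\,\mathrm{E}(a)+(1-p)\,\mathrm{E}(r))\big)$.

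Dividing,
\[
\frac{C'}{C}=\Theta\!\left(\frac{1}{n}\right)\cdot\frac{p\,\mathrm{E}(a)+(1-p)\,\mathrm{E}(r)}{p\,\mathrm{E}(a)}=\Theta\!\left(\frac{1}{n}\right)\left(1+\frac{1-p}{p}\cdot\frac{\mathrm{E}(r)}{\mathrm{E}(a)}\right)=O\!\left(\frac{1}{n}\right),
\]
where the final step is exactly the standing assumption $\frac{1-p}{p}\cdot\frac{\mathrm{E}(r)}{\mathrm{E}(a)}=O(1)$, giving $C'=O(n^{-1})\,C$.

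I expect the main obstacle to be not the arithmetic but pinning down a precise version of the informal statement ``a $\mathsf{REV}$ does not significantly change the repository size''. A substitution-heavy edit of script size $r_t$ can in principle still shift $S_t$, so to be rigorous I would fix the convention that a $\mathsf{REV}$ step has zero expected size change (or, more loosely, a per-step size change that is $O(\mathrm{E}(a))$, which cannot disturb the $\Theta(n^{2})$ order of $C$), and then prove only the two one-sided estimates the theorem actually needs, namely $C=\Omega\!\big(n^{2}\,p\,\mathrm{E}(a)\big)$ and $C'=O\!\big(n\,(p\,\mathrm{E}(a)+(1-p)\,\mathrm{E}(r))\big)$. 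Combined with the bound on $\tfrac{1-p}{p}\cdot\tfrac{\mathrm{E}(r)}{\mathrm{E}(a)}$ this yields the claim without needing a tight handle on revision-induced size drift.
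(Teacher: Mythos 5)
Your proposal is correct and follows essentially the same route as the paper: compute $\mathrm{E}(S_t)=\Theta(t\,p\,\mathrm{E}(a))$ so that $C=\sum_t\mathrm{E}(S_t)=\Theta(n^2 p\,\mathrm{E}(a))$, compute $C'=\sum_t\mathrm{E}(I_t)=\Theta\bigl(n(p\,\mathrm{E}(a)+(1-p)\mathrm{E}(r))\bigr)$, and close with the standing assumption $\frac{1-p}{p}\cdot\frac{\mathrm{E}(r)}{\mathrm{E}(a)}=O(1)$. Your extra care in pinning down that a $\mathsf{REV}$ step contributes nothing (or only lower order) to the growth of $S_t$, and in treating $t=1$ as a deterministic $\mathsf{ADD}$, only tightens what the paper leaves implicit.
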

		
	\begin{proof}
	    \[\mathrm{E}(I_n) = p\mathrm{E}(a)+(1-p)\mathrm{E}(r),\]
	    \[\mathrm{E}(S_n) = \sum_{t=1}^{n}p\mathrm{E}(a) =np\mathrm{E}(a)\]
	    \[C=\sum_{t=1}^{n}\mathrm{E}(S_t)=\frac{n(n+1)}{2}p\mathrm{E}(a)=O(n)\mathrm{E}(S_n).\]
	    \[C'=\sum_{t=1}^{n}\mathrm{E}(I_t)=np\mathrm{E}(a)+n(1-p)\mathrm{E}(r).\]
	    Based on our assumption, 
	    \[\frac{1-p}{p}\cdot\frac{\mathrm{E}(r)}{\mathrm{E}(a)} = O(1)\]
	    then
	    \[C'=np\mathrm{E}(a)(1+O(1))=O(1)\mathrm{E}(S_n).\]
	    Therefor, the expected storage cost of FileDAG to keep all versions of a multi-versioned file is nearly equal to the size of the newest version of the file, while traditional solutions cost $O(n)$ times in expectation. 
	    \[C'=O(n^{-1})C.\]
	\end{proof}

	In section~\ref{sec:exp}, we report our experimental results to further support the above conclusion.
	
	\section{Performance Evaluation}
	\label{sec:exp}
	
	To evaluate the performance of FileDAG, we carry out real experiments. Specifically, we implement FileDAG based on the description in Section~\ref{sec:filedag_design}, and perform the full processes of put and get operations. 
	
		\begin{figure}[!htbp]
		\centerline{\includegraphics[width=0.48\textwidth]{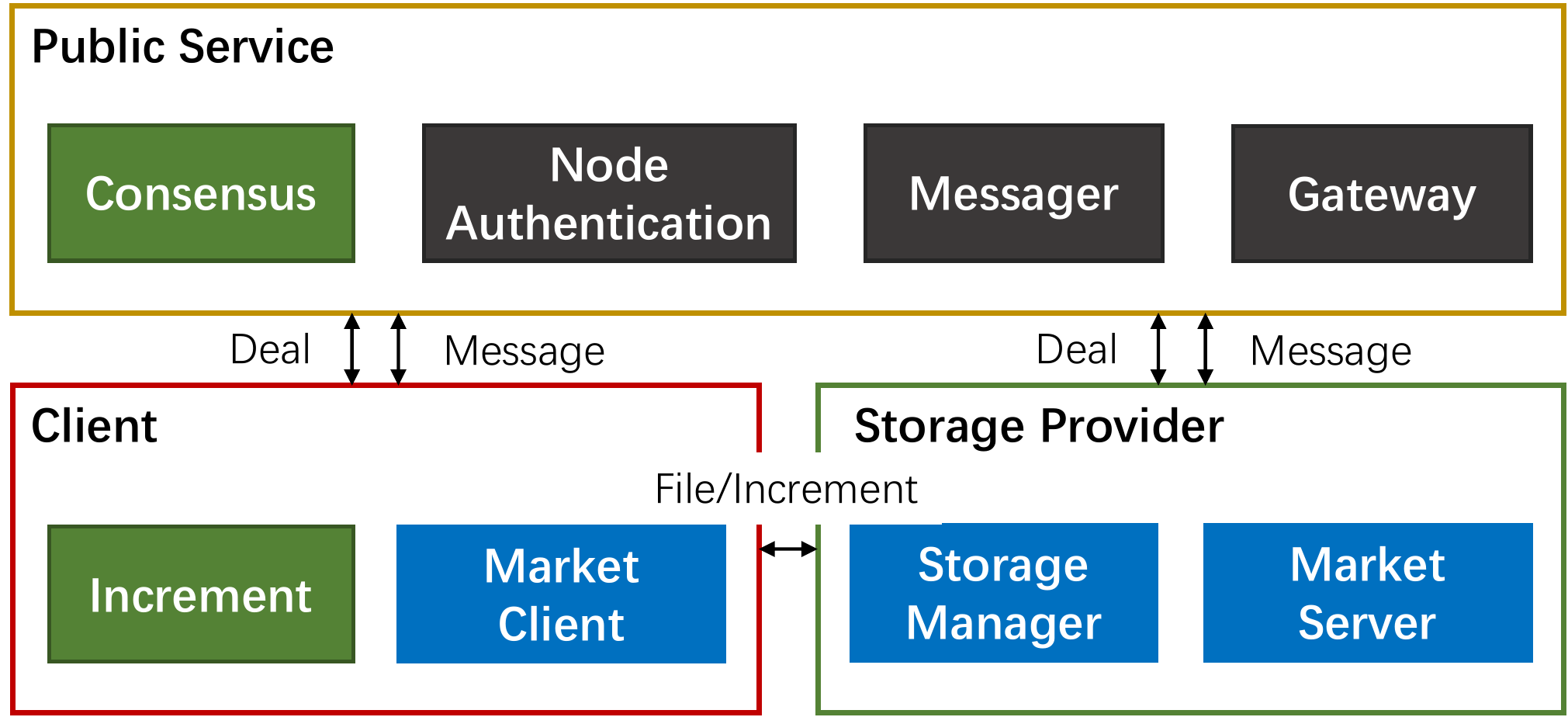}}
		\caption{Block diagram of FileDAG}
		\label{fig:block_diagram}
	\end{figure}
	
	\subsection{Implementation}
	
	As FileDAG shares many common properties with Filecoin, we implement it by making modifications on Filecoin. This also provides a chance for us to objectively compare FileDAG with Filecoin.
	Venus, which has a modular design and was written in Golang, is one of the four main implementations of Filecoin. We fork a branch from Venus (version v1.1.3-rc1) to build FileDAG. %It is modular and 
	A typical Venus deployment consists of three components, namely Public Service, Client, and Storage Provider (Storage Miner) \cite{venus}. Correspondingly, FileDAG has the same three components and the modules included in each component are illustrated in Fig.~\ref{fig:block_diagram}. Specifically, we build the Consensus and Increment modules from scratch, and they are marked green in Fig.~\ref{fig:block_diagram}. Additionally,  we make adaptations on the Market modules (Market Client and Market Server) and the Storage Manager of Venus and reuse them in FileDAG -- they are colored blue in Fig.~\ref{fig:block_diagram}. The three modules marked gray, i.e., Node Authentication, Messager, and Gateway,  are directly inherited from Venus. 
	The Increment module is implemented with 1096 lines of code in Golang and is included in Client. It consists of an Increment Generator and an Increment Accumulator, which can be deployed separately, as a data consumer who does not create or update files may not need the Increment Generator while a data provider who does not download files may not need the Increment Accumulator.
    We build a brand-new Consensus module to replace the one in Venus and develop our two-layer DAG ledger for FileDAG. The upper layer of the ledger is realized by an independent prototype of DAG-Rider with 424 lines of code in Golang. Note that the lower layer is implemented when revising the Market Client. 
	We also modify the Market Server in Storage Provider to attach version control information when transferring files/increments, and add the $\mathsf{Retrieve()}$ function to the Storage Manager module to obtain the versions of all fragments needed to recover the file of a particular version.
	Modifications on Venus take 603 lines of code in total. Other modules of Venus that are not mentioned in this paper stay as they are so that we can objectively evaluate the performance enhancement brought by our innovation. All components of FileDAG implementation are written in Golang, and we build and test FileDAG with go1.17.11.

	\begin{table*}[!htbp]
		\caption{Comparison between FileDAG and other state-of-the-art DSNs}
		\label{tab:eva}
		\begin{center}
			\begin{tabular}{l r r r r r r r r}
				\toprule[1pt]				
				\ & \multicolumn{4}{c}{Text} & \multicolumn{1}{c}{Multimedia} & \multicolumn{3}{c}{APK (Binary)} \\
				\cline{2-5} \cline{7-9}
				\ & \makecell{IPLD} & \makecell{go-ipfs} & \makecell{ccf-\\deadlines} & \makecell{Git} & \makecell{PPT} & \makecell{Minecraft} & \makecell{WeChat} & \makecell{Netflix} \\
				\midrule[0.8pt]
				\makecell[l]{\# of versions} & 212 & 129 & 244 & 845 & 21 & 277 & 18 & 20\\ 
				%\hline
				\makecell[l]{Average size (MB)} & 3.3 & 7.3 & 1.8 & 50.9 & 8.7 & 14.4 & 228.4 & 99.1\\ 
				\hline
				\makecell[l]{\textbf{Storage (FileDAG) (MB)}} & 7.1 & 16.3 & 2.9 & 104.8 & 11.8 & 1503.0 & 3052.3 & 356.8\\ 
				%\hline
				\makecell[l]{Storage (Filecoin (Venus)) (MB)} & 712.8 & 985.3 & 483.4 & 43035.7 & 183.7 & 3994.4 & 4110.4 & 1981.7\\ 
				%\hline
				\makecell[l]{Storage (Filecoin (Lotus)) (MB)} & 712.8 & 985.3 & 483.4 & 43035.7 & 183.7 & 3994.4 & 4110.4 & 1981.7\\ 
				%\hline
				\makecell[l]{Storage (Sia) (MB)} & 642005.1 & 47889.1 & 11850.5 & 2592801.5 & 391.6 & 12839.2 & 78046.4 & 47644.7\\ 
				\hline
				\makecell[l]{\textbf{Put runtime (FileDAG) (s)}}& 5.1 & 5.3 & 5.0 & 5.4 & 7.7 & 22.2 & 760.5 & 92.1\\ 
				%\hline
				\makecell[l]{Put runtime (Filecoin (Venus)) (s)} & 9.4 & 13.7 & 7.9 & 59.2 & 14.5 & 20.9 & 232.3 & 103.3\\ 
				%\hline
				\makecell[l]{Put runtime (Filecoin (Lotus)) (s)} & 9.3 & 13.8 & 8.2 & 59.0 & 14.7 & 20.3 & 234.2 & 106.3\\ 
				%\hline
				\makecell[l]{Put runtime (Sia) (s)} & 112.6 & 407.1 & 51.1 & 500.9 & 74.1 & 33.7 & 534.6 & 227.7\\ 
				\hline
				\makecell[l]{\textbf{Get runtime (FileDAG) (s)}} & 0.8 & 1.0 & 0.7 & 1.0 & 1.3 & 6.6 & 182.8 & 19.2\\ 
				%\hline
				\makecell[l]{Get runtime (Filecoin (Venus)) (s)} & 4.2 & 8.3 & 2.6 & 53.5 & 9.9 & 15.7 & 232.7 & 103.2\\ 
				%\hline
				\makecell[l]{Get runtime (Filecoin (Lotus)) (s)} & 4.0 & 8.4 & 2.7 & 53.9 & 9.6 & 15.5 & 230.9 & 104.1\\ 
				%\hline
				\makecell[l]{Get runtime (Sia) (s)} & 8.3 & 24.1 & 5.0 & 106.6 & 20.1 & 43.8 & 671.2 & 313.2\\ 
				\bottomrule[1pt]
			\end{tabular}
		\end{center}
	\end{table*}

	\subsection{Experiment Setup}
	\label{ss:exp_setup}
	
	We deploy FileDAG on 5 computers, with each having 2-Core CPU, 4GB memory and 40GB NVMe SSD, and running Ubuntu 22.04 LTS. The bandwidth of each computer is 1MB/s. We use the 5 computers to run: 1 Service node, 3 Storage Providers, and 1 Client.
	
	Based on the design difference between FileDAG and Filecoin, one can see that the performance changes brought by FileDAG come from three aspects: 1) storage space saved by the increment mechanism and the novel DAG ledger, 2) extra processing latency (in both uploading and downloading) brought by the increment mechanism, and 3) the decreased transmission latency due to smaller sizes of the payloads. We evaluate the storage cost and runtime of the operations in the put and get phases when providing multi-versioned file storage.	
	
	Files used in our evaluation consist of three types: text, multimedia, and binary. There exist plenty of online text files, e.g., code repositories, at GitHub.
    As shown in Table~\ref{tab:eva}, we clone 4 repositories from GitHub, namely IPLD, go-ipfs, ccf-deadlines, and Git. We extract all the versions in each repository, and store them in our implemented FileDAG network. Each of these repositories has hundreds of versions and their average sizes range from 1.8 MB to 50.9 MB. Additionally, we use FileDAG to store a multi-versioned presentation PPT (Microsoft PowerPoint) as an example of multimedia file. Such files are common in practice as when preparing academic reports or degree defenses, people usually make revisions on a PPT multiple times and keep historical versions for possible rolling backs. 
	In our evaluation, the PPT file used is a research report maintained by one of the authors. This file has 21 versions and the average size is 8.7 MB.  
	Finally, we take the APK (Android application package) files of several popular apps as examples of binary files. These apps include a game (Minecraft), an instant messaging software (WeChat), and an entertainment app (Netflix). We have downloaded these APK files from Uptodown\footnote{https://en.uptodown.com}, which provides downloads of APK files with different versions. As of this writing, WeChat and Netflix have 18 and 20 versions on Uptodown, while Minecraft has 277 versions. The average sizes of them range from 14.4 MB to 228.4 MB. We test the APK files because the BSDiff algorithm used in our Increment module has been widely adopted to generate patch files for APK updates in Android applications. 
	For comparison purpose, we select two market-tested DSNs,  i.e., Filecoin and Sia mentioned in Section~\ref{sec:related}, as the baselines for our evaluation. Particularly, we include two implementations of Filecoin, namely Venus and Lotus, for a more comprehensive comparison study.

	\subsection{Evaluation Results}
	
	A summary of the evaluation results averaged over 100 trials on the eight datasets mentioned above are reported in Table~\ref{tab:eva}, in which the three sections show the evaluation results in terms of storage, put (upload) runtime, and get (download) runtime. Storage measures the storage cost of the DSNs storing a multi-versioned file. The put/get runtime measures the average time cost of the DSNs to complete a put/get operation over one version of a multi-versioned file. One can see that FileDAG saves up to  $25\% \sim 99\%$ storage space compared to Filecoin. This is because FileDAG stores increments instead of the complete versions of the multi-versioned files.
    For the text and multimedia files, the put/get runtimes of FileDAG are significantly shorter than those incurred by other DSNs. However, limited by the performance of the increment generating algorithms processing binary files, the put runtime of FileDAG might be longer than those incurred by other DSNs, especially for WeChat, whose size is much bigger than those of the other two binary files. But this doesn't mean that FileDAG is not suitable for binary files. One can see that the get runtime of FileDAG is still significantly shorter than those of the other two Filecoins. As in practice, a binary file, e.g., a software installer, is usually downloaded multiple times (by different team members, or from different computers) once being put on the network, the longer time of an upload in FileDAG can be easily amortized by the shorter time of many downloads. 
    Besides, as shown in Table~\ref{tab:eva}, compared to FileDAG and Filecoin, Sia has extremely high storage cost and latency; thus it is ignored in the following studies. 
    
    \begin{figure*}[!ht]
	    \centering
		\begin{subfigure}{10cm}
			\includegraphics[width=\textwidth]{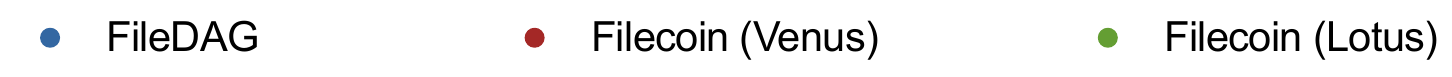}
		\end{subfigure}
		\qquad
		\\
		\begin{subfigure}{5.4cm}
			\includegraphics[width=\textwidth]{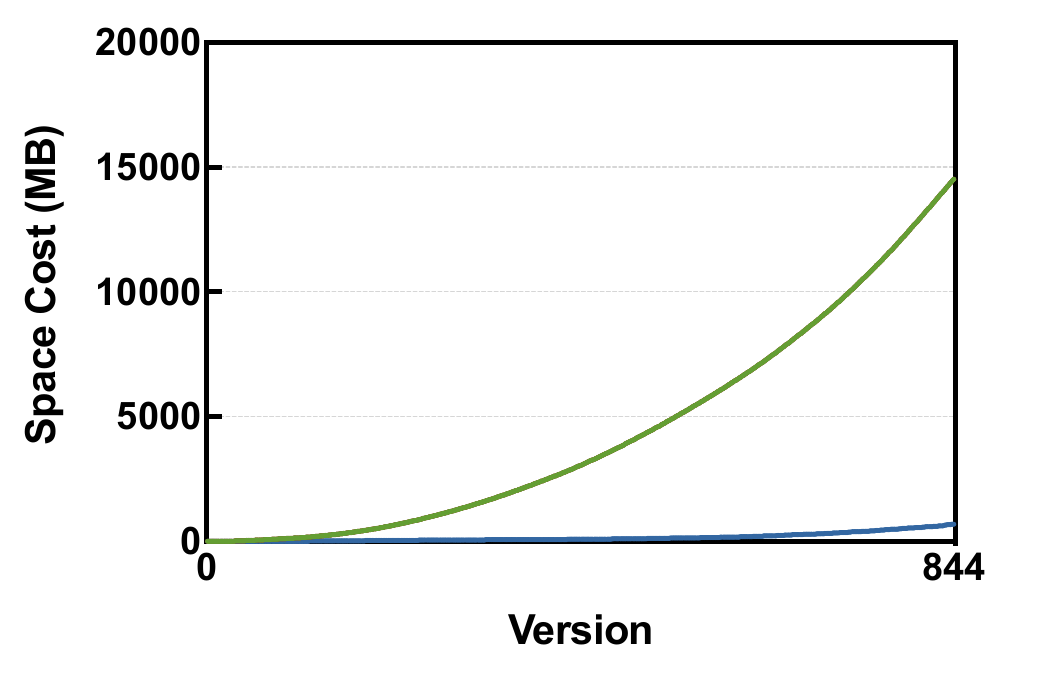}
			\caption{Git storage cost}\label{fig:textspace}
		\end{subfigure}
		\qquad
		\begin{subfigure}{5.4cm}
			\includegraphics[width=\textwidth]{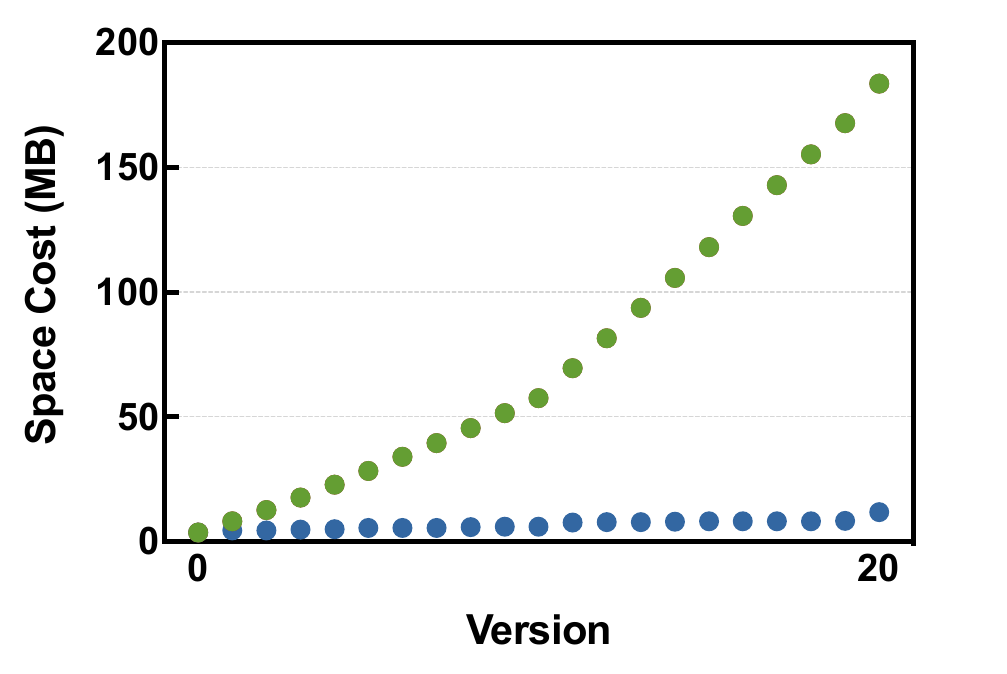}
			\caption{PPT storage cost}\label{fig:pptspace}
		\end{subfigure}
		\qquad
		\begin{subfigure}{5.4cm}
			\includegraphics[width=\textwidth]{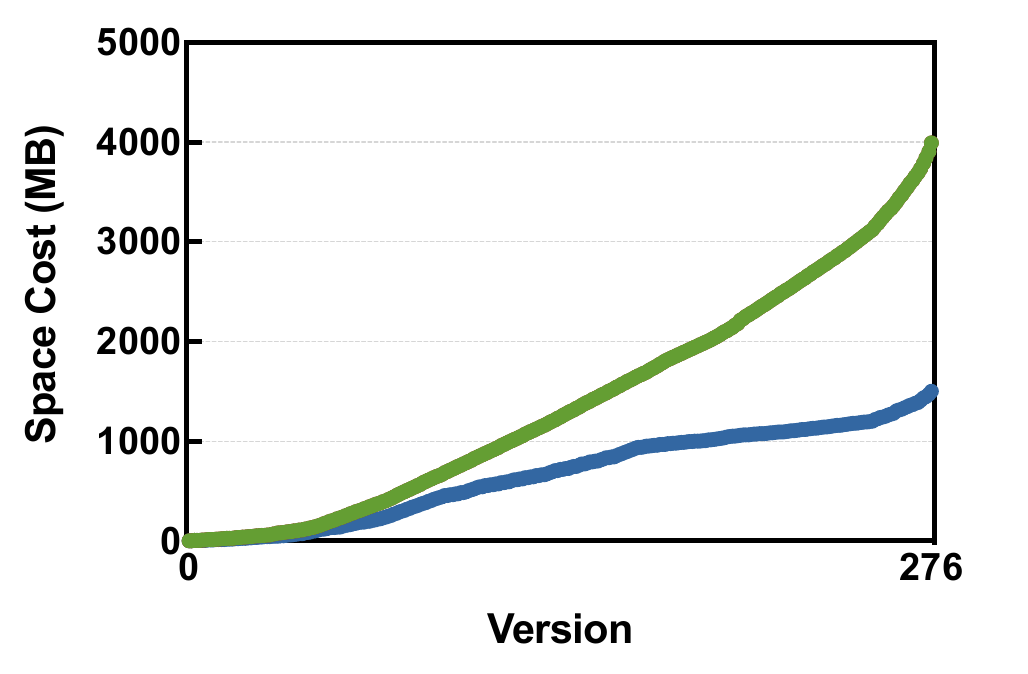}
			\caption{Minecraft storage cost}\label{fig:apkspace}
		\end{subfigure}
		\qquad
		\caption{Full Version Storage Costs}\label{fig:space_full}
	\end{figure*}
	
	\begin{figure*}[!htbp]
		\centering
		\begin{subfigure}{8cm}
			\includegraphics[width=\textwidth]{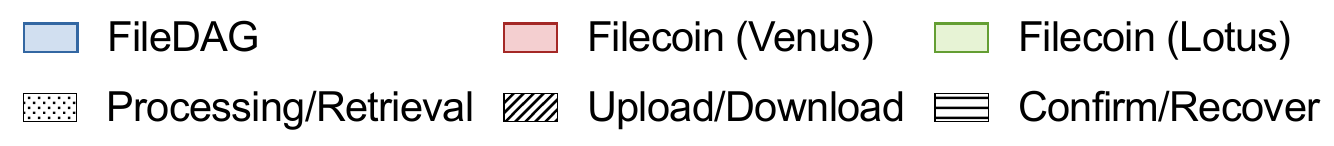}
		\end{subfigure}
		\\
		\begin{subfigure}{5.4cm}
			\includegraphics[width=\textwidth]{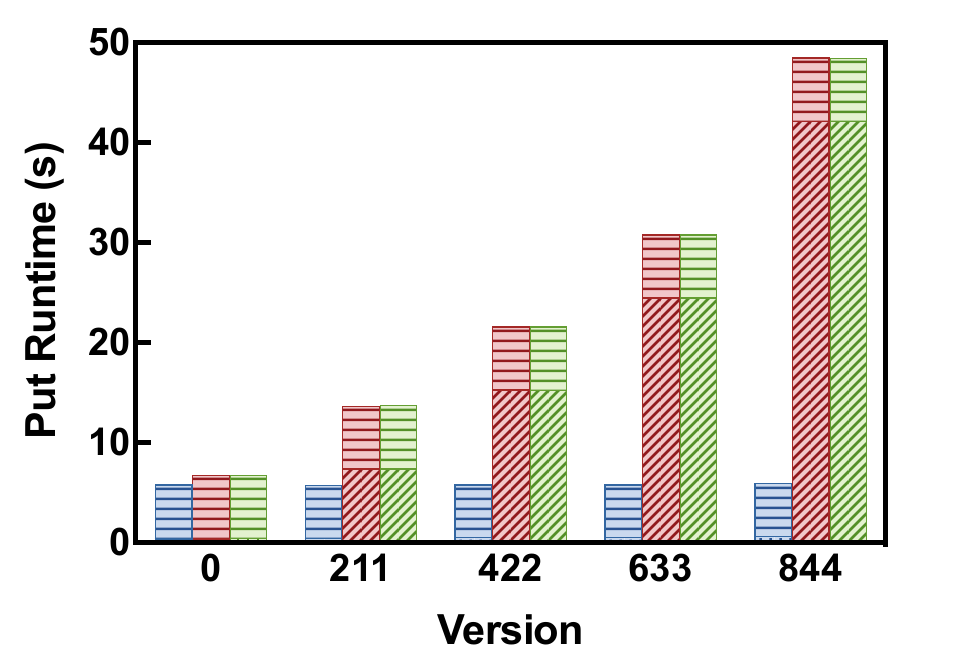}
			\caption{Put runtime of Git}\label{fig:gitup}
		\end{subfigure}
		\qquad
		\begin{subfigure}{5.4cm}
			\includegraphics[width=\textwidth]{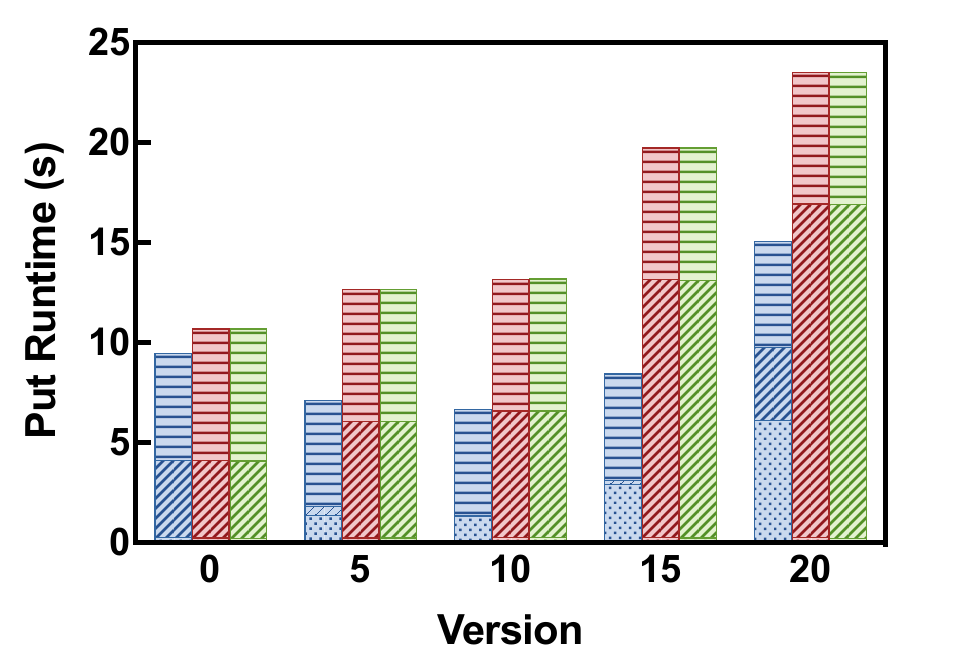}
			\caption{Put runtime of PPT}\label{fig:pptup}
		\end{subfigure}
		\qquad
		\begin{subfigure}{5.4cm}
			\includegraphics[width=\textwidth]{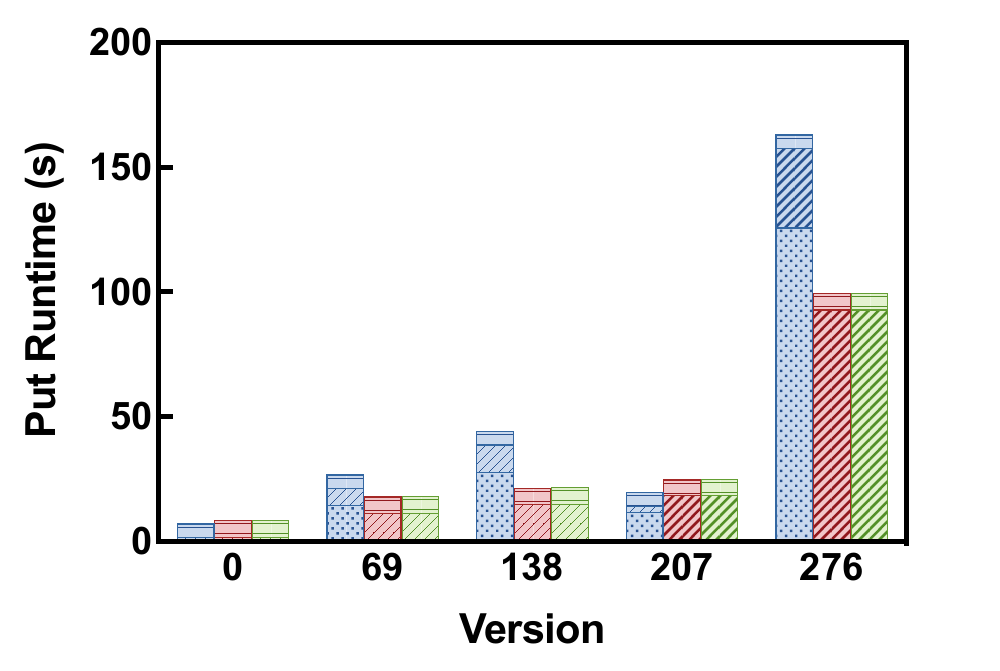}
			\caption{Put runtime of Minecraft}\label{fig:mcup}
		\end{subfigure}
		\qquad
		\\
		\begin{subfigure}{5.4cm}
			\includegraphics[width=\textwidth]{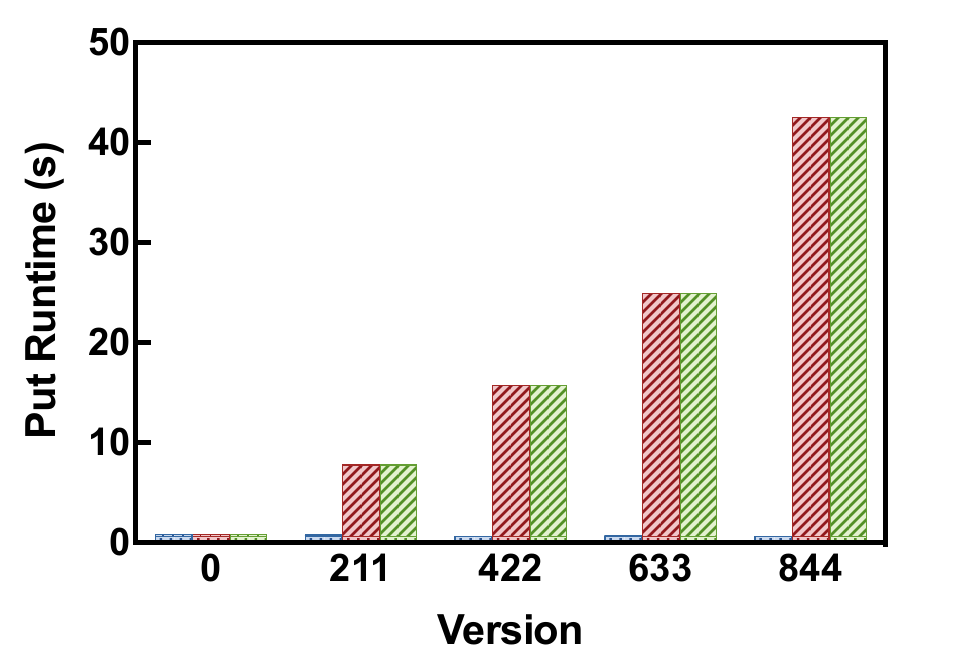}
			\caption{Get runtime of Git}\label{fig:gitdown}
		\end{subfigure}
		\qquad
		\begin{subfigure}{5.4cm}
			\includegraphics[width=\textwidth]{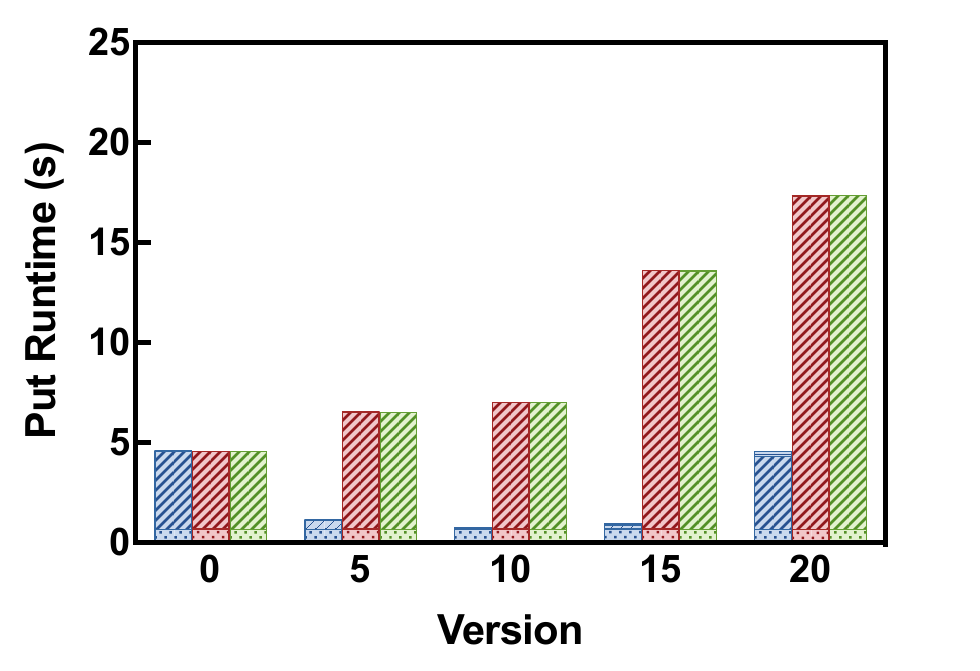}
			\caption{Get runtime of PPT}\label{fig:pptdown}
		\end{subfigure}
		\qquad
		\begin{subfigure}{5.4cm}
			\includegraphics[width=\textwidth]{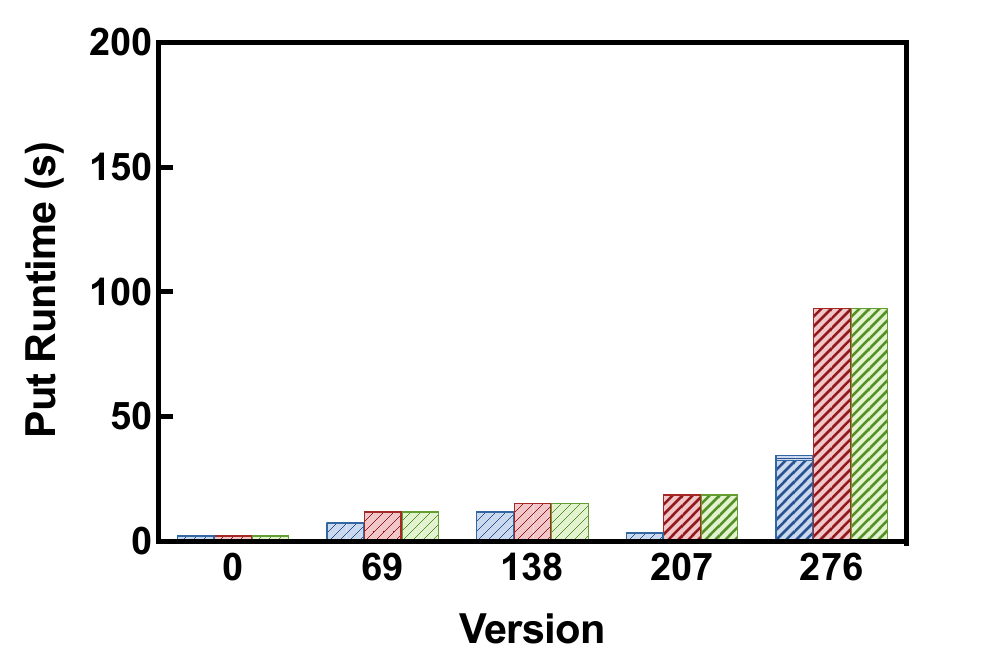}
			\caption{Get runtime of Minecraft}\label{fig:mcdown}
		\end{subfigure}
		\qquad
		\caption{Put and get runtime}
		\label{fig:latency}
	\end{figure*}

	To better demonstrate the superiority of FileDAG over the baseline DSNs, we use Fig.~\ref{fig:space_full} and Fig.~\ref{fig:latency} to respectively depict the accumulated storage cost and the put/get runtimes of the three types of files under our study. Due to limited space, we report the results of only one multi-versioned file in each category and select Git for text, PPT for multimedia, and Minecraft for binary. Particularly, in Fig.~\ref{fig:latency}, as it is impossible to draw the bar graphs of all versions, we choose 5 versions of each file, including the initial and final ones, to illustrate the results. 
 The experimental results of other files exhibit very similar trends. 
	
	\textbf{Storage Cost.} We measure and compare the disk usages of the DSN miners to demonstrate that FileDAG achieves low storage costs. As one can see from Fig.~\ref{fig:space_full}\footnote{Because Venus and Lotus adopt the same deterministic packing algorithm, these two DSNs take equal storage spaces; thus the curves for their storage cost results overlap in Fig.~\ref{fig:space_full}.}, while a multi-versioned file updates its versions, the storage size increases for all DSNs, but the growth of FileDAG is much slower compared to both implementations of Filecoin. More specifically, one can see that the storage costs of Filecoin exhibit roughly quadratic growth, while those of FileDAG show approximately linear growth. This confirms our analysis in Section~\ref{sec:analysis}. One can also see that the more version a file has, the more storage space FileDAG can save via the increment mechanism, and the size of the increment generated is related to how much a version differs from its previous version. This explains why the save of the storage space of Minecraft by FileDAG is not as big as those of the other two files, as versions of Minecraft are quite different.
	
	\textbf{Put Runtime.} The runtimes of the put phase of FileDAG and the baselines storing the above-mentioned three files are reported in the top three subfigures of Fig.~\ref{fig:latency}. In a typical DSN, the put phase consists of three steps: processing, transmission, and on-chain confirmation. During the processing step, a DSN client packs a new file version (in Filecoin) or the increment of the new version (in FileDAG) into specific format of payloads. Transmission gets the payloads transmitted from the client to the designated storage miner. After transmitting the file (or increment) to the miner, the client creates a transaction that contains the file CID and the miner ID, then gets it confirmed on the blockchain to complete the storage put operation. To better demonstrate these three steps, we split each bar in Fig.~\ref{fig:latency} into three sections with different textures. 
 
   As shown in Fig.~\ref{fig:gitup}, Fig.~\ref{fig:pptup} and Fig.~\ref{fig:mcup}, 
    one can see that the on-chain confirmation latency of all the tested files are similar and remain stable. This is because the transaction sizes remain stable no matter how large the files are as transactions only contain CIDs or hashes whose sizes are constants. Additionally,  based on our observation, the confirmation latency of FileDAG is about 1 second shorter than those of the two Filecoin implementations, thanks to the performance improvement brought by the consensus algorithm in our two-layer DAG ledger. Processing and transmission latencies intuitively depend on the sizes of the payloads. In Fig.~\ref{fig:pptup} and Fig.~\ref{fig:mcup}, one can see that due to increment generation, FileDAG has higher processing latencies and lower transmission latencies\footnote{The processing and transmission latences of Git in Fig.~\ref{fig:gitup} is not obvious as they are too short.}. In most cases, the put latencies of FileDAG are lower than those of the baseline. In some other cases such as for Minecraft, due to the complexity of the increment generation algorithm, the put latencies of FileDAG might be longer, but they are still acceptable, as usually put (upload) is an infrequent operation compared to the get (download) operation of the same file. 
    What's more, one can see that for the initial version of each file, FileDAG and Filecoin spend equal time on processing and transmission, because all DSNs process and transmit the same full-versioned file.
	
	\textbf{Get Runtime.} Similar to the put phase, the get phase of most DSNs also consists of three steps: retrieval, transmission, and file recovery. During retrieval, a miner gathers the required information (CIDs and the miner addresses) from the blockchain. Then the corresponding files (or increments) are sent to the client during the transmission step. As files can be encrypted for transmission or transmitted in small pieces, the client needs to recover the file in the last step. Therefore we also split each bar in Fig.~\ref{fig:gitup}, Fig.~\ref{fig:pptup}, and Fig.~\ref{fig:mcup} into three sections with different textures. One can see that the latency of a get operation is mainly attributed to retrieval and transmission, while file recovery latency is negligible. The retrieval latency of FileDAG and Filecoin are close and both stay stable at around 0.7 seconds. The transmission latency of FileDAG is lower because of the increment mechanism. Particularly, version 207 of Minecraft has a small update compared to version 206, so the corresponding increment is small, thus the transmission process finishes shortly. Note that the latency saved by FileDAG during the get phase is much more than that saved during the put phase, because the increment generation takes more time than recovery.  As we have mentioned earlier, practically a file is usually downloaded multiple times once being put on the network, the total latency saved by FileDAG can be significant.
	
	\section{Conclusions}
	\label{sec:conclusion}
	In this paper, we describe the design and implementation of FileDAG, a DSN built on DAG-based blockchain. FileDAG supports file-level deduplication in storing multi-versioned files by adopting an increment mechanism. Besides, FileDAG supports flexible and storage-saving file indexing by introducing a two-layer DAG-based blockchain ledger. We implement an actual instance of FileDAG and evaluate its performance. The results demonstrate that FileDAG outperforms the state-of-the-art industrial DSNs in storage cost and latency.
	In our future research, we plan to improve the performance of DSN by designing a faster and more effective proof of storage mechanism to further save storage cost of DSNs.
	
	%\newpage
	
	\ifCLASSOPTIONcaptionsoff
	\newpage
	\fi
	
	\bibliographystyle{IEEEtran}
	\bibliography{FileDAG.bib}

% Generated by IEEEtran.bst, version: 1.14 (2015/08/26)
\begin{thebibliography}{10}
\providecommand{\url}[1]{#1}
\csname url@samestyle\endcsname
\providecommand{\newblock}{\relax}
\providecommand{\bibinfo}[2]{#2}
\providecommand{\BIBentrySTDinterwordspacing}{\spaceskip=0pt\relax}
\providecommand{\BIBentryALTinterwordstretchfactor}{4}
\providecommand{\BIBentryALTinterwordspacing}{\spaceskip=\fontdimen2\font plus
\BIBentryALTinterwordstretchfactor\fontdimen3\font minus
  \fontdimen4\font\relax}
\providecommand{\BIBforeignlanguage}[2]{{%
\expandafter\ifx\csname l@#1\endcsname\relax
\typeout{** WARNING: IEEEtran.bst: No hyphenation pattern has been}%
\typeout{** loaded for the language `#1'. Using the pattern for}%
\typeout{** the default language instead.}%
\else
\language=\csname l@#1\endcsname
\fi
#2}}
\providecommand{\BIBdecl}{\relax}
\BIBdecl

\bibitem{spdl}
M.~Xu, Z.~Zou, Y.~Cheng, Q.~Hu, D.~Yu, and X.~Cheng, ``Spdl: A
  blockchain-enabled secure and privacy-preserving decentralized learning
  system,'' \emph{IEEE Transactions on Computers}, 2022.

\bibitem{tems}
C.~Liu, H.~Guo, M.~Xu, S.~Wang, D.~Yu, J.~Yu, and X.~Cheng, ``Extending
  on-chain trust to off-chain – trustworthy blockchain data collection using
  trusted execution environment (tee),'' \emph{IEEE Transactions on Computers},
  vol.~71, no.~12, pp. 3268--3280, 2022.

\bibitem{cloudchain}
M.~Xu, S.~Liu, D.~Yu, X.~Cheng, S.~Guo, and J.~Yu, ``Cloudchain: a cloud
  blockchain using shared memory consensus and rdma,'' \emph{IEEE Transactions
  on Computers}, 2022.

\bibitem{bittorrent}
B.~Cohen, ``The bittorrent protocol specification,'' 2008.

\bibitem{storageareanetwork}
J.~Tate, P.~Beck, H.~H. Ibarra, S.~Kumaravel, L.~Miklas \emph{et~al.},
  \emph{Introduction to storage area networks}.\hskip 1em plus 0.5em minus
  0.4em\relax IBM Redbooks, 2018.

\bibitem{filecoin}
P.~Labs. (2017) Filecoin: A decentralized storage network.

\bibitem{storj}
S.~Wilkinson, T.~Boshevski, J.~Brandoff, and V.~Buterin, ``Storj a peer-to-peer
  cloud storage network,'' 2014.

\bibitem{sia}
D.~Vorick and L.~Champine, ``Sia: Simple decentralized storage,''
  \emph{Retrieved May}, vol.~8, p. 2018, 2014.

\bibitem{swarm}
\BIBentryALTinterwordspacing
the Swarm~team. (2021) Swarm: Storage and communication infrastructure for a
  self-sovereign digital society. [Online]. Available:
  \url{https://www.ethswarm.org/swarm-whitepaper.pdf}
\BIBentrySTDinterwordspacing

\bibitem{w3s}
\BIBentryALTinterwordspacing
P.~Labs. (2022) Web3 storage - the simple file storage service for ipfs \&
  filecoin. [Online]. Available: \url{https://web3.storage/docs/}
\BIBentrySTDinterwordspacing

\bibitem{btfs}
\BIBentryALTinterwordspacing
BitTorrent. (2022) Btfs. [Online]. Available: \url{https://docs.btfs.io/}
\BIBentrySTDinterwordspacing

\bibitem{filebase}
\BIBentryALTinterwordspacing
Filebase. (2022) Use filebase as the origin for your cdn. [Online]. Available:
  \url{https://filebase.com/solutions/content-delivery/}
\BIBentrySTDinterwordspacing

\bibitem{merkledag}
\BIBentryALTinterwordspacing
P.~Labs. (2022) Merkle dags. [Online]. Available:
  \url{https://docs.ipfs.tech/concepts/merkle-dag/}
\BIBentrySTDinterwordspacing

\bibitem{versiongraph}
R.~Achar, \emph{The Global Object Tracker: Decentralized Version Control for
  Replicated Objects}.\hskip 1em plus 0.5em minus 0.4em\relax University of
  California, Irvine, 2020.

\bibitem{githubgraph}
\BIBentryALTinterwordspacing
GitHub. (2022) Configuring branches and merges in your repository. [Online].
  Available:
  \url{https://docs.github.com/en/repositories/configuring-branches-and-merges-in-your-repository}
\BIBentrySTDinterwordspacing

\bibitem{tangle}
S.~Popov, ``The tangle,'' \emph{White paper}, vol.~1, no.~3, 2018.

\bibitem{hashgraph}
L.~Baird, ``The swirlds hashgraph consensus algorithm: Fair, fast, byzantine
  fault tolerance,'' \emph{Swirlds Tech Reports SWIRLDS-TR-2016-01, Tech. Rep},
  vol.~34, 2016.

\bibitem{ipfs}
J.~Benet, ``Ipfs-content addressed, versioned, p2p file system (draft 3),''
  \emph{arXiv preprint arXiv:1407.3561}, 2014.

\bibitem{ethereum}
G.~Wood \emph{et~al.}, ``Ethereum: A secure decentralised generalised
  transaction ledger,'' \emph{Ethereum project yellow paper}, vol. 151, no.
  2014, pp. 1--32, 2014.

\bibitem{threefish}
N.~Ferguson, S.~Lucks, B.~Schneier, D.~Whiting, M.~Bellare, T.~Kohno,
  J.~Callas, and J.~Walker, ``The skein hash function family,''
  \emph{Submission to NIST (round 3)}, vol.~7, no. 7.5, p.~3, 2010.

\bibitem{gfs}
S.~Ghemawat, H.~Gobioff, and S.-T. Leung, ``The google file system,'' in
  \emph{Proceedings of the nineteenth ACM symposium on Operating systems
  principles}, 2003, pp. 29--43.

\bibitem{gnutella}
\BIBentryALTinterwordspacing
(2022) Regarding gnutella. [Online]. Available:
  \url{https://www.gnu.org/philosophy/gnutella.html}
\BIBentrySTDinterwordspacing

\bibitem{coralcdn}
\BIBentryALTinterwordspacing
(2022) Coral cdn. [Online]. Available: \url{http://www.coralcdn.org/}
\BIBentrySTDinterwordspacing

\bibitem{pass}
K.-K. Muniswamy-Reddy, D.~A. Holland, U.~Braun, and M.~I. Seltzer,
  ``Provenance-aware storage systems.'' in \emph{Usenix annual technical
  conference, general track}, 2006, pp. 43--56.

\bibitem{provchain}
X.~Liang, S.~Shetty, D.~Tosh, C.~Kamhoua, K.~Kwiat, and L.~Njilla, ``Provchain:
  A blockchain-based data provenance architecture in cloud environment with
  enhanced privacy and availability,'' in \emph{2017 17th IEEE/ACM
  International Symposium on Cluster, Cloud and Grid Computing (CCGRID)}.\hskip
  1em plus 0.5em minus 0.4em\relax IEEE, 2017, pp. 468--477.

\bibitem{garay2015bitcoin}
J.~Garay, A.~Kiayias, and N.~Leonardos, ``The bitcoin backbone protocol:
  Analysis and applications,'' in \emph{Annual international conference on the
  theory and applications of cryptographic techniques}.\hskip 1em plus 0.5em
  minus 0.4em\relax Springer, 2015, pp. 281--310.

\bibitem{bitcoinng}
I.~Eyal, A.~E. Gencer, E.~G. Sirer, and R.~Van~Renesse, ``$\{$Bitcoin-NG$\}$: A
  scalable blockchain protocol,'' in \emph{13th USENIX symposium on networked
  systems design and implementation (NSDI 16)}, 2016, pp. 45--59.

\bibitem{nxt}
\BIBentryALTinterwordspacing
N.~community. (2022) Nxt whitepaper. [Online]. Available:
  \url{https://nxtdocs.jelurida.com/Nxt_Whitepaper}
\BIBentrySTDinterwordspacing

\bibitem{dagcoin}
Y.~Ribero and D.~Raissar, ``Dagcoin whitepaper,'' \emph{Whitepaper, no. May},
  pp. 1--71, 2018.

\bibitem{nakivoincrement}
\BIBentryALTinterwordspacing
NAKIVO. (2021) Incremental backup. [Online]. Available:
  \url{https://www.nakivo.com/incremental-backup/}
\BIBentrySTDinterwordspacing

\bibitem{myers1986ano}
E.~W. Myers, ``An o(nd) difference algorithm and its variations,''
  \emph{Algorithmica}, vol.~1, no. 1-4, pp. 251--266, 1986.

\bibitem{bsdiff}
C.~Percival, ``Na{\i}ve differences of executable code,'' 2003.

\bibitem{dagrider}
\BIBentryALTinterwordspacing
I.~Keidar, E.~Kokoris-Kogias, O.~Naor, and A.~Spiegelman, ``All you need is
  dag,'' in \emph{Proceedings of the 2021 ACM Symposium on Principles of
  Distributed Computing}, ser. PODC'21.\hskip 1em plus 0.5em minus 0.4em\relax
  New York, NY, USA: Association for Computing Machinery, 2021, pp. 165--175.
  [Online]. Available: \url{https://doi.org/10.1145/3465084.3467905}
\BIBentrySTDinterwordspacing

\bibitem{venus}
\BIBentryALTinterwordspacing
I.~Force. (2022) Venus docs. [Online]. Available:
  \url{https://venus.filecoin.io}
\BIBentrySTDinterwordspacing

\end{thebibliography}

\end{document}